\definecolor{ForestGreen}{rgb}{0.1333,0.5451,0.1333}
\definecolor{DarkRed}{rgb}{0.80,0,0}
\definecolor{Red}{rgb}{1,0,0}
\theoremstyle{plain}
\newtheorem{theorem}{Theorem}[section]
\newtheorem{lemma}[theorem]{Lemma}
\newtheorem{corollary}[theorem]{Corollary}
\theoremstyle{definition}
\newtheorem{definition}[theorem]{Definition}
\newcommand{\R}{\mathbb{R}}
\newcommand{\calM}{\mathcal{M}}
\newcommand{\calP}{\mathcal{P}}
\newcommand{\bfv}{\mathbf{v}}
\newcommand{\bfc}{\boldsymbol{c}}
\newcommand{\bfr}{\mathbf{r}}
\newcommand{\bfx}{\mathbf{x}}
\newcommand{\MMS}{\mathsf{MMS}}
\newcommand{\Sym}{\mathsf{Sym}}
\newcommand{\SW}{\mathsf{SW}}
\newcommand{\argmax}{\mathop{\arg\max}}
\newcommand{\argmin}{\mathop{\arg\min}}
\newcommand{\PSg}{PS\textsuperscript{g}}
\newcommand{\PSc}{PS\textsuperscript{c}}
\title{When is Truthfully Allocating Chores no Harder than Goods?\footnote{The authors are ordered alphabetically.}}
\author{
    Bo Li\thanks{The Hong Kong Polytechnic University. Emails: \texttt{comp-bo.li@polyu.edu.hk}, \texttt{fangxiao.wang@connect.polyu.hk}}
    \and
    Biaoshuai Tao\thanks{Shanghai Jiao Tong University. Email: \texttt{bstao@sjtu.edu.cn}}
    \and
    Fangxiao Wang\footnotemark[2]
    \and
    Xiaowei Wu\thanks{University of Macau. Emails: \texttt{\{xiaoweiwu, yc17423\}@um.edu.mo}}
    \and
    Mingwei Yang\thanks{Stanford University. Email: \texttt{mwyang@stanford.edu}}
    \and
    Shengwei Zhou\footnotemark[4]
}
\date{}
\begin{document}
\maketitle
\date{}

\begin{abstract}
    We study the problem of fairly and efficiently allocating a set of items among strategic agents with additive valuations, where items are either all indivisible or all divisible.
    When items are \emph{goods}, numerous positive and negative results are known regarding the fairness and efficiency guarantees achievable by \emph{truthful} mechanisms, whereas our understanding of truthful mechanisms for \emph{chores} remains considerably more limited.
    In this paper, we discover various connections between truthful good and chore allocations, greatly enhancing our understanding of the latter via tools from the former.

    For indivisible chores with two agents, by leveraging the observation that a simple bundle-swapping operation transforms several properties for goods including truthfulness to the corresponding properties for chores, we characterize truthful mechanisms and derive tight guarantees of various fairness notions achieved by truthful mechanisms.
    Moreover, for homogeneous divisible chores, by generalizing the above transformation to an arbitrary number of agents, we characterize truthful mechanisms with two agents, show that every truthful mechanism with two agents admits an \emph{efficiency ratio} of $0$, and derive a large family of \emph{strictly truthful}, \emph{envy-free (EF)}, and \emph{proportional} mechanisms for an arbitrary number of agents.
    Finally, for indivisible chores with an arbitrary number of agents having \emph{bi-valued} cost functions, we give an \emph{ex-ante} truthful, ex-ante \emph{Pareto optimal}, ex-ante EF, and \emph{ex-post envy-free up to one item} mechanism, improving the best guarantees for bi-valued instances by prior works.
\end{abstract}
\section{Introduction}

The problem of fair division concerns how to divide a set of items among $n$ agents in a ``fair'' way.
The study of this problem can be traced back to 1948 when \cite{steihaus1948problem} first formalized fairness and introduced the notion of \emph{proportionality (PROP)}, i.e., each agent receives at least $1/n$ of his total value for the resources.
Besides, another prominent fairness notion is \emph{envy-freeness (EF)}~\cite{foley1967resource}, which asserts that exchanging bundles does not make any agent strictly happier.
When items are divisible and homogeneous, uniformly allocating each item to agents guarantees both PROP and EF.
During the past decades, there has also been increasing interest in the discrete setting where items are indivisible, in which EF or PROP allocations might not exist.\footnote{Consider the case with two agents and one item.}
This gives rise to several relaxations of the above fairness notions including \emph{envy-freeness up to one item (EF1)} \cite{DBLP:conf/sigecom/LiptonMMS04,DBLP:conf/bqgt/Budish10}, \emph{envy-freeness up to any item (EFX)} \cite{DBLP:conf/ecai/GourvesMT14,DBLP:journals/teco/CaragiannisKMPS19}, and \emph{maximin share fairness (MMS)} \cite{DBLP:conf/bqgt/Budish10}, whose computation and existence are then extensively studied by follow-up works.

When mechanisms are implemented in practice, agents are incentivized to misreport their preferences provided that doing so would lead to a more desirable outcome from their perspective, which could potentially cause severe fairness and welfare losses.
This inspires the study of \emph{truthful} mechanisms for the resource allocation problem by a long line of literature~\cite{DBLP:journals/sigecom/BezakovaD05,DBLP:conf/wine/MarkakisP11,DBLP:conf/ijcai/AmanatidisBM16}.
The mainstream of recent research can be divided into two parts: The first one assumes all items to be \emph{goods}, which are positively valued by agents, while the other one assumes all items to be \emph{chores}, which are negatively valued.
We separately introduce them as follows.

\paragraph{Truthful Allocation of Goods.}
When the goods are indivisible, \cite{DBLP:conf/sigecom/AmanatidisBCM17} characterize the entire family of truthful mechanisms for two agents with \emph{additive} valuations, which implies that truthfulness is incompatible with any non-trivial fairness notion.
Nevertheless, this impossibility result can be bypassed by studying \emph{binary} valuations~\cite{DBLP:conf/wine/0002PP020,DBLP:conf/aaai/BabaioffEF21,DBLP:conf/aaai/BarmanV22}, allowing monetary transfer~\cite{DBLP:journals/geb/GokoIKMSTYY24}, or resorting to randomized mechanisms~\cite{DBLP:journals/corr/abs-2407-13634}.

For (homogeneous) divisible goods, \cite{DBLP:conf/sigecom/ColeGG13} give a truthful mechanism that achieves a $1/e$-approximation of \emph{Nash welfare} with some goods unallocated.
More recently, \cite{freeman2023equivalence} characterize the set of truthful mechanisms for two agents (subject to some mild technical conditions) and provide a large family of \emph{strictly truthful}, PROP, and EF mechanisms.

\paragraph{Truthful Allocation of Chores.}
In stark contrast to the fruitful outcomes for truthful good allocation, our understanding of truthful chore allocation still remains considerably more limited.
For indivisible chores, \cite{DBLP:journals/mp/AzizLW24} study a randomized mechanism that is truthful \emph{in expectation} and achieve an expected approximation ratio of $O(\sqrt{\log n})$ for MMS.
Moreover, \cite{DBLP:journals/eor/SunC25} give a fair and efficient randomized mechanism for \emph{restricted additive} valuations\footnote{Restricted additive valuations constitute a strict superset of binary valuations and a strict subset of additive valuations.}, whose truthfulness also holds in expectation.
To the best of our knowledge, no non-trivial results are known for homogeneous divisible chores.

\paragraph{}
In light of the huge discrepancy between truthful good and chore allocations, we henceforth ask:

\begin{quote}
\em Are fairness and efficiency compatible with truthfulness for chores?
    In particular, when can we apply the tools from truthful good allocation to handle chores?    
\end{quote}
Regarding the latter question, we emphasize that allocating chores is known to be more challenging than goods in various aspects (see, e.g., \cite{DBLP:conf/sigecom/ChaudhuryGMM22,DBLP:conf/ijcai/GargMQ23}), and, to the best of our knowledge, no general connection between good and chore allocations is known.

\subsection{Our Contributions}

In this paper, we bridge the gap between truthful good and chore allocations by observing their connections in various important settings.
We assume agents' cost functions to be additive and describe our results as follows.

For indivisible items with two agents, it has been observed by several prior works~\cite{DBLP:journals/scw/BeiHS20,DBLP:journals/aamas/Segal-Halevi20,DBLP:journals/siamdm/BeiESS25} that the allocation of goods and that for chores can be equated by the following simple reduction: When allocating goods (resp. chores), we treat the utility (resp. cost) functions as cost (resp. utility) functions and allocate the items as if they are chores (resp. goods), and then we swap the bundles in the allocation.
We show that this reduction also preserves truthfulness and various fairness properties (\Cref{thm:trans-indiv-two-agents}).
This, together with the characterization by \cite{DBLP:conf/sigecom/AmanatidisBCM17} for goods, allows us to characterize the family of truthful mechanisms for chores with two agents (\Cref{thm:char-truth-chores-informal}) and derive the tight guarantees achieved by truthful mechanisms with chores for various fairness notions (\Cref{coro:indiv-ef1} and \Cref{coro:indiv-mms}).

Moving to homogeneous divisible items, we notice that the above reduction from chore allocation to good allocation with two agents can be generalized to an arbitrary number of agents.
To illustrate, suppose that we are given a mechanism with prominent properties for goods, and we invoke it with agents' cost functions as input to obtain an allocation.
Intuitively, in this allocation, each agent's bundle consists of items that are highly undesired by him, and hence, we uniformly distribute each agent's bundle among all other agents.
We show that such an operation transforms fairness and truthfulness for goods to the corresponding properties for chores (\Cref{thm:div-goods-chores}), which, combined with the results by \cite{freeman2023equivalence}, enables us to characterize the family of truthful mechanisms for chores with two agents subject to some mild technical conditions (\Cref{thm:divi-char-chores}) and show that the \emph{efficiency ratio}, the approximation ratio in \emph{social welfare}, of every truthful mechanism for chores with two agents is $0$ (\Cref{thm:pot-homo-chores}).
Moreover, the family of strictly truthful, PROP, and EF mechanisms for goods with an arbitrary number of agents given by \cite{freeman2023equivalence} can be easily adapted to mechanisms for chores with the same guarantees (see the remarks after \Cref{thm:div-goods-chores}).

Finally, we study indivisible items for an arbitrary number of agents with \emph{bi-valued} cost functions, i.e., the cost of each agent for each item is either $p$ or $q$ for some $p > q > 0$, and the bi-valued setting has received extensive attention in recent years \cite{DBLP:journals/tcs/AmanatidisBFHV21,DBLP:journals/tcs/GargM23,DBLP:conf/atal/EbadianP022,DBLP:conf/aaai/GargMQ22}.
Recall that a randomized mechanism satisfies a property \emph{ex-ante} if it holds in expectation, and satisfies a property \emph{ex-post} if it holds for every realized integral allocation.
We give a randomized mechanism that satisfies ex-ante truthfulness, ex-ante \emph{Pareto optimality (PO)}, ex-ante EF, and ex-post EF1 for chores (\Cref{thm:bivalue-chores}).
This strengthens the guarantees of mechanisms by prior works \cite{DBLP:conf/atal/EbadianP022,DBLP:conf/aaai/GargMQ22} for the same setting, which only satisfy ex-post \emph{fractional PO (fPO)} and ex-post EF1 for chores.\footnote{Note that ex-ante PO implies ex-post fPO.}
To achieve this, we leverage the mechanism for homogeneous divisible goods with bi-valued agents recently introduced by \cite{DBLP:journals/corr/abs-2407-13634} and the implementation scheme of \cite{DBLP:conf/wine/Aziz20}.
In particular, we largely exploit the \emph{balance} property of the allocations returned by both of the aforementioned components, and considerable efforts are needed to prove the promised guarantees of our mechanism.

\subsection{Other Related Work}

During the past decades, there have been plentiful results in the area of fair allocation.
In this subsection, we only focus on the most related works in the strategic setting and refer to the recent survey~\cite{Amanatidis_2023} for a more comprehensive overview of the fair allocation literature.

\paragraph{Truthful Cake Cutting.}
Besides the cases where resources are modeled as indivisible or homogeneous divisible items, another popular model assumes resources to be a heterogeneous divisible item, which is often referred to as \emph{cake cutting}.
The majority of literature on cake cutting assumes the cake to be a good.
The first truthful and EF cake-cutting mechanism for \emph{piecewise-uniform} valuations is given by \cite{DBLP:journals/geb/ChenLPP13}, which is later shown to be equivalent to the \emph{Maximum Nash Welfare} mechanism and satisfy a stronger notion of \emph{group strategyproofness} \cite{DBLP:conf/wine/AzizY14}.
For two agents with piecewise-uniform valuations, \cite{DBLP:journals/scw/BeiHS20} give a truthful and EF mechanism without the \emph{free-disposal} assumption and derive the same result when the cake is a chore by the reduction from chores allocation to goods allocation.
For the more general piecewise-constant valuations, \cite{DBLP:journals/ai/BuST23} show the non-existence of truthful and PROP cake-cutting mechanisms even for two agents.

\paragraph{Beyond Truthful Mechanisms.}
Due to the strong impossibility of simultaneously achieving truthfulness and fairness, several relaxations of truthfulness have been proposed.
\cite{DBLP:wine/AmanatidisBFLLR21,DBLP:conf/sigecom/AmanatidisBL0R23} study the fairness of the outcomes induced by \emph{pure Nash equilibria} with respect to the underlying true valuations, which they term as \emph{equilibrium fairness}.
A recent series of research~\cite{DBLP:journals/jcss/HuangWWZ24,tao2024fairtruthfulmechanismsadditive,bei2025incentive,DBLP:conf/atal/0037SX24} adopts the concept of \emph{incentive ratio}, which quantitatively limits each agent's gain by manipulation in the worst case.
Other relaxed variants of truthfulness include \emph{maximin strategyproofness}~\cite{brams2006better}, \emph{non-obvious manipulability}~\cite{DBLP:conf/nips/0001V22,ortega2022obvious}, and \emph{risk-averse truthfulness}~\cite{DBLP:journals/ai/BuST23,DBLP:journals/corr/abs-2502-18805}.
\section{Preliminaries}

Let $N = [n]$ denote the set of agents and $O = \{o_1, \ldots, o_m\}$ denote the set of items.
We will assume that either all items are \emph{goods} or all of them are \emph{chores}, and use \emph{utilities} and \emph{costs} to respectively describe the values of items.
In particular, goods bring non-negative utilities, and agents are happier with a higher utility; conversely, chores bring non-negative costs, and agents are happier with a lower cost.
We define the settings of indivisible items and homogeneous divisible items separately.

\subsection{Indivisible Items}

For indivisible items, an \emph{(integral) allocation} $A = (A_1, \ldots, A_n)$ is a partition of $O$ satisfying $\bigcup_{i \in N} A_i = O$ and $A_i \cap A_j = \emptyset$ for all $i, j \in N$ with $i \neq j$, where $A_i$ is the bundle assigned to agent $i$.
When items are goods, each agent $i$ is associated with a utility function $v_i: 2^O \to \mathbb{R}_{\geq 0}$.
For notational simplicity, for all $i \in N$, $O' \subseteq O$, and $o \in O$, we will write $v_i(\{o\})$, $v_i(O' \cup \{o\})$, and $v_i(O' \setminus \{o\})$ as $v_i(o)$, $v_i(O' + o)$, and $v_i(O' - o)$, respectively.
We assume utility functions to be \emph{additive}, i.e., $v_i(S) = \sum_{o \in S} v_i(o)$ for all $i \in N$ and $S \subseteq O$.
We adopt the same notations and assumptions for the cost functions $c_i: 2^O \to \R_{\geq 0}$ when items are chores.
Given an allocation $A$, the \emph{utility} (resp. \emph{cost}) of each agent $i$ is defined as $v_i(A_i)$ (resp. $c_i(A_i)$).

\begin{definition}[EF1]
    An allocation $A$ is \emph{envy-free up to one item (EF1) for goods (resp. chores)} if for all $i, j \in N$ with $A_j \neq \emptyset$ (resp. $A_i \neq \emptyset$), there exists $o \in A_j$ (resp. $o \in A_i$) such that $v_i(A_i) \geq v_i(A_j - o)$ (resp. $c_i(A_i - o) \leq c_i(A_j)$).
\end{definition}

\begin{definition}[MMS]
    For each agent $i$, define his \emph{maximin share for goods (resp. chores)} as $\MMS_i^g
        = \max_A \min_{j \in N} v_i(A_j)$ (resp. $\MMS_i^c = \min_A \max_{j \in N} c_i(A_j)$).
    An allocation $A$ is \emph{$\alpha$-approximate maximin share fair ($\alpha$-MMS) for goods (resp. chores)} if $v_i(A_i) \geq \alpha \cdot \MMS_i^g$ (resp. $c_i(A_i) \leq \alpha \cdot \MMS_i^c$) for every $i \in N$.
    Note that $\alpha\leq 1$ for goods and $\alpha\geq 1$ for chores.
\end{definition}

\subsection{Homogeneous Divisible Items}

For homogeneous divisible items, a (fractional) \emph{allocation} 
$\bfx = (x_1, \ldots, x_n) \in [0, 1]^{n \times m}$ satisfies $\sum_{i \in N} x_i(o) = 1$ for every $o \in O$, where $x_i$ denotes the bundle received by agent $i$ and $x_i(o)$ denotes the fraction of item $o$ allocated to agent $i$.
When items are goods, each agent $i$ is associated with a non-negative utility function $v_i$ where $v_i(x_i)$ represents the utility of agent $i$ for every bundle $x_i$.
For every $S \subseteq O$, we slightly abuse notations and use $v_i(S)$ to denote $v_i(\mathbf{1}_S)$, where $\mathbf{1}_S \in [0, 1]^m$ is the bundle such that $\mathbf{1}_S(o) = 1$ for every $o \in S$ and $\mathbf{1}_S(o) = 0$ for every $o \in O \setminus S$.
For each $o \in O$, we use $v_i(o)$ to denote $v_i(\mathbf{1}_{\{o\}})$.
We assume utility functions to be \emph{additive}, i.e., $v_i(x_i) = \sum_{o \in O} x_i(o) \cdot v_i(o)$.
We normalize agents' utility functions so that $v_i(O) = 1$ for every $i \in N$.
We define the cost functions $c_i(\cdot)$ in the same manner for chores.
Given an allocation $\bfx$, the \emph{utility (resp. \emph{cost})} of each agent $i$ is defined as $v_i(x_i)$ (resp. $c_i(x_i)$).

\begin{definition}[EF]
    An allocation $\bfx$ is \emph{envy-free (EF) for goods (resp. chores)} if for all $i, j \in N$, $v_i(x_i) \geq v_i(x_j)$ (resp. $c_i(x_i) \leq c_i(x_j)$).
\end{definition}

\begin{definition}[PROP]
    An allocation $\bfx$ is \emph{proportional (PROP) for goods (resp. chores)} if for all $i \in N$, $v_i(x_i) \geq 1 / n$ (resp. $c_i(x_i) \leq 1 / n$).
\end{definition}

\begin{definition}[PO]
    An allocation $\bfx$ is \emph{Pareto optimal (PO) for goods (resp. chores)} if there is no allocation $\bfx'$ with $v_i(x_i') \geq v_i(x_i)$ (resp. $c_i(x_i') \leq c_i(x_i)$) and at least one inequality being strict.
\end{definition}

\subsection{Mechanisms}

When items are goods, a \emph{mechanism} $\calM$ takes as input a utility profile $\bfv = (v_1, \ldots, v_n)$ and outputs an allocation $\calM(\bfv)$, where we denote $\calM_i(\bfv)$ as the bundle received by agent $i$.
When items are chores, the input of $\calM$ is a cost profile $\bfc=(c_1, \ldots, c_n)$.
We define the truthfulness property of mechanisms for goods and chores separately, which requires that no agent can be strictly happier by misreporting his preference.

\begin{definition}[Truthfulness]
    A mechanism $\calM$ is \emph{truthful for goods (resp. chores)} if for every utility profile $\bfv = (v_1, \ldots, v_n)$ (resp. cost profile $\bfc = (c_1, \ldots, c_n)$), for all $i \in N$ and utility function $v_i'$ (resp. cost function $c_i'$), $v_i(\calM_i(\bfv)) \geq v_i(\calM_i(v_i', v_{-i}))$ (resp. $c_i(\calM_i(\bfc)) \leq c_i(\calM_i(c_i', c_{-i}))$).
    We say that $\calM$ is \emph{strictly truthful} if the above inequality is strict whenever $v_i' \neq v_i$.
\end{definition}

For a mechanism $\calM$ and a utility profile $\bfv$ (resp. cost profile $\bfc$), denote the optimal \emph{social welfare for goods (resp. chores)} as $\SW^g(\bfv) = \sum_{o \in O} \max_{i \in N} v_i(o)$ (resp. $\SW^c(\bfc) = \sum_{o \in O} \min_{i \in N} c_i(o)$), and denote $\SW(\calM(\bfv)) = \sum_{i \in N} v_i(\calM_i(\bfv))$ (resp. $\SW(\calM(\bfc)) = \sum_{i \in N} c_i(\calM_i(\bfc))$)
as the social welfare achieved by $\calM$ on a given profile.
Next, we define the \emph{efficiency ratio} of a truthful mechanism, which measures the worst-case approximation ratio of the social welfare achieved by the mechanism with respect to the optimal one.

\begin{definition}[Efficiency Ratio]
    The \emph{efficiency ratio for goods (resp. chores)} of a truthful mechanism $\calM$ is defined as
    \begin{align*}
        \inf_{\bfv} \frac{\SW(\calM(\bfv))}{\SW^g(\bfv)}
        \quad \left( \text{resp. } \inf_{\bfc} \frac{\SW^c(\bfc)}{\SW(\calM(\bfc)))} \right).
    \end{align*}
\end{definition}

We will also consider \emph{randomized mechanisms} for indivisible items.
We say that a randomized mechanism $\calM$ satisfies certain properties \emph{ex-ante} if the fractional allocation $\bfx$ defined by $x_i(o) = \Pr[o \in \calM_i(\cdot)]$ for all $i \in N$ and $o \in O$ satisfies such properties.
Moreover, we say that $\calM$ satisfies certain properties \emph{ex-post} if every integral allocation in its support satisfies such properties.

\section{Indivisible Items with Two Agents}

In this section, we assume that there are $n = 2$ agents, and items are indivisible.
We reduce the task of allocating chores to that of goods via a simple observation in prior works~\cite{DBLP:journals/scw/BeiHS20,DBLP:journals/aamas/Segal-Halevi20,DBLP:journals/siamdm/BeiESS25}: Given an arbitrary mechanism $\calM^g$ (possibly with desirable guarantees for goods), we define a new mechanism $\calM^c$ such that for every cost profile $\bfc$, $\calM^c(\bfc) = (\calM^g_2(\bfc), \calM^g_1(\bfc))$.
In other words, the allocation outputted by $\calM^c$ results from swapping the bundles in the allocation produced by $\calM^g$ under the same input.
Notice that for all cost profile $\bfc$ and $i \in N$, it holds that $\calM^c_i(\bfc) = O \setminus \calM^g_i(\bfc)$.

\begin{theorem}\label{thm:trans-indiv-two-agents}
    Let $\calM^g$ be a mechanism for indivisible items with two agents.
    The following statements hold:
    \begin{enumerate}
        \item $\calM^g$ is truthful for goods iff $\calM^c$ is truthful for chores.

        \item $\calM^g$ is EF1 for goods iff $\calM^c$ is EF1 for chores.

        \item For every $\alpha \in [0, 1]$, $\calM^g$ is $\alpha$-MMS for goods iff $\calM^c$ is $(2 - \alpha)$-MMS for chores.
    \end{enumerate}
\end{theorem}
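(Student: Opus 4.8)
The plan is to exploit the single structural identity that drives everything: for $n=2$ and every cost profile $\bfc$, we have $\calM^c_i(\bfc) = O \setminus \calM^g_i(\bfc)$. Since there are only two agents, the two bundles in any allocation partition $O$, so knowing agent $i$'s bundle determines agent $j$'s bundle as its complement. The key algebraic consequence of additivity is that for any additive function $f_i$ and any bundle $S$, we have $f_i(O \setminus S) = f_i(O) - f_i(S)$. I will apply this translation with the utility function $v_i$ in the goods world playing the role of the cost function $c_i$ in the chores world (i.e., we run $\calM^g$ treating the reported costs as if they were utilities), and then verify each of the three properties in turn.

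\textbf{Part 1 (Truthfulness).} First I would fix an agent $i$ and a chore profile $\bfc$, and compare $c_i(\calM^c_i(\bfc))$ with $c_i(\calM^c_i(c_i', c_{-i}))$. Using the identity $\calM^c_i = O \setminus \calM^g_i$ and additivity, $c_i(\calM^c_i(\bfc)) = c_i(O) - c_i(\calM^g_i(\bfc))$, and similarly for the deviation. Thus the chore-truthfulness inequality $c_i(\calM^c_i(\bfc)) \leq c_i(\calM^c_i(c_i', c_{-i}))$ is equivalent, after subtracting the constant $c_i(O)$ and flipping the inequality, to $c_i(\calM^g_i(\bfc)) \geq c_i(\calM^g_i(c_i', c_{-i}))$. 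Reading the cost functions $c_j$ as utility functions $v_j$, this is exactly the goods-truthfulness inequality for $\calM^g$ on the profile $\bfv = \bfc$. Since the map $\bfc \mapsto \bfv$ is a bijection between chore profiles and utility profiles, the ``iff'' follows. The strict version, if needed, carries over verbatim since the complementation is a bijection on bundles.

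\textbf{Part 2 (EF1).} I would translate the chores EF1 condition directly. For two agents, EF1 for chores asks that for each ordered pair there is an item removable from the envier's own bundle making him unenvious; by complementation $\calM^c_i = O \setminus \calM^g_i = \calM^g_j$, so agent $i$'s chore bundle \emph{is} agent $j$'s goods bundle. Substituting $\calM^c_i = \calM^g_j$ and $\calM^c_j = \calM^g_i$ into the chores EF1 inequality $c_i(\calM^c_i - o) \leq c_i(\calM^c_j)$ turns it (reading $c_i$ as $v_i$) into $v_i(\calM^g_j - o) \leq v_i(\calM^g_i)$ with the removed item $o$ now lying in the goods bundle $\calM^g_j$ of the other agent — precisely the goods EF1 condition with the roles of $i$ and $j$ swapped. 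Because swapping the agent labels is a symmetry of the ``for all ordered pairs'' quantifier, the two conditions are equivalent.

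\textbf{Part 3 ($\alpha$-MMS) — the main obstacle.} This is where the arithmetic is genuinely different and is the step I expect to require the most care. The two-agent maximin shares satisfy a clean complementary relation: $\MMS^g_i = \max_S \min\{v_i(S), v_i(O \setminus S)\}$ and $\MMS^c_i = \min_S \max\{c_i(S), c_i(O \setminus S)\}$, and reading $c_i$ as $v_i$ gives $\MMS^c_i = v_i(O) - \MMS^g_i$ (the optimal split for the max-of-two-costs problem is the same split that is optimal for the max-min-of-two-utilities problem, with the two objective values summing to $v_i(O)$). The goods guarantee $v_i(\calM^g_i(\bfv)) \geq \alpha \cdot \MMS^g_i$ becomes, via $c_i(\calM^c_i) = c_i(O) - c_i(\calM^g_i)$, the chore cost bound $c_i(\calM^c_i(\bfc)) \leq c_i(O) - \alpha \cdot \MMS^c_i$, and I would then show this equals exactly $(2-\alpha)\cdot \MMS^c_i$ using the normalization-free identity $c_i(O) = \MMS^c_i + \MMS^g_i$ (with $v_i$ read as $c_i$); substituting $\MMS^g_i = c_i(O) - \MMS^c_i$ yields $c_i(O) - \alpha\,\MMS^c_i = \MMS^c_i + (1-\alpha)\MMS^c_i + (c_i(O)-2\MMS^c_i)$, so the clean $(2-\alpha)$ factor emerges precisely when $c_i(O) = 2\,\MMS^c_i$, which fails in general. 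The resolution I would pursue is that for $n=2$ the correct invariant is $\MMS^g_i + \MMS^c_i = c_i(O)$ \emph{together with} the tighter structural fact that both directions of the MMS inequality should be normalized against $\MMS^c_i$ rather than $c_i(O)$; the careful bookkeeping to land on the factor $(2-\alpha)$ exactly, and to confirm the converse direction, is the crux of the argument.
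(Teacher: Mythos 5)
Parts 1 and 2 of your proposal are correct and essentially identical to the paper's argument: complementation plus additivity gives $c_i(\calM^c_i(\bfc)) = c_i(O) - c_i(\calM^g_i(\bfc))$, which transfers truthfulness after subtracting $c_i(O)$ and flipping the inequality, and the observation $\calM^c_i(\bfc) = \calM^g_{-i}(\bfc)$ turns the chores EF1 condition into the goods EF1 condition with the roles of the two agents swapped, exactly as in the paper.

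Part 3 contains a genuine gap, and you acknowledge it yourself by leaving ``the careful bookkeeping to land on the factor $(2-\alpha)$'' unresolved. There are two problems. First, an algebra slip: translating the goods guarantee $v_i(\calM^g_i) \geq \alpha\,\MMS^g_i$ through the complementation identity yields $c_i(\calM^c_i(\bfc)) \leq c_i(O) - \alpha\,\MMS^g_i$, with $\MMS^g_i$ (not $\MMS^c_i$) in the subtracted term; chasing the version with $\MMS^c_i$ is what led you to the spurious requirement $c_i(O) = 2\,\MMS^c_i$. Second, and more fundamentally, you are trying to show the translated bound \emph{equals} $(2-\alpha)\,\MMS^c_i$, whereas the theorem only requires an upper bound, and the missing one-line ingredient is the inequality $\MMS^g_i \leq \MMS^c_i$, which holds for $n=2$ because $\MMS^g_i \leq c_i(O)/2 \leq \MMS^c_i$. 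With it, the paper's chain closes immediately:
\begin{align*}
    c_i(\calM^c_i(\bfc))
    = c_i(O) - c_i(\calM^g_i(\bfc))
    \leq c_i(O) - \alpha\,\MMS^g_i
    = \MMS^c_i + (1-\alpha)\,\MMS^g_i
    \leq (2-\alpha)\,\MMS^c_i,
\end{align*}
where the middle equality uses $c_i(O) = \MMS^g_i + \MMS^c_i$ (which you correctly identified), and the last step uses $\MMS^g_i \leq \MMS^c_i$ together with $1-\alpha \geq 0$. No instance-wise condition such as $c_i(O) = 2\,\MMS^c_i$ is needed: the factor $2-\alpha$ is not an identity but the worst case of $1 + (1-\alpha)\,\MMS^g_i/\MMS^c_i$, attained when $\MMS^g_i = \MMS^c_i$. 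Once the forward chain is repaired this way, your overall structure matches the paper's proof, which likewise establishes only this direction explicitly and dispatches the converse as the symmetric manipulation.
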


\begin{proof}
    We only prove the direction that a certain property of $\calM^g$ for goods implies the corresponding property of $\calM^c$ for chores, and the opposite direction can be established analogously.
    Fix a cost profile $\bfc$, and we examine each property separately.
    
    \paragraph{Truthfulness.}
    Assuming that $\calM^g$ is truthful for goods, and agent $i$ manipulates his cost function as $c_i'$, it holds that
    \begin{align*}
        c_i(\calM_i^c(c_i', c_{-i}))
        &= c_i(O) - c_i(\calM^g_i(c_i', c_{-i}))\\
        &\geq c_i(O) - c_i(\calM^g_i(\bfc))
        = c_i(\calM^c_i(\bfc)),
    \end{align*}
    where the inequality holds by the truthfulness of $\calM^g$ for goods.
    Hence, $\calM^c$ is truthful for chores.

    \paragraph{EF1.}
    Suppose that $\calM^g$ is EF1 for goods, which implies that for every $i \in N$ with $\calM^g_{- i}(\bfc) \neq \emptyset$, there exists $o \in \calM^g_{- i}(\bfc)$ such that $c_i(\calM^g_i(\bfc)) \geq c_i(\calM^g_{- i}(\bfc) - o)$.
    Since $o \in \calM^g_{- i}(\bfc) = \calM^c_i(\bfc)$,
    \begin{align*}
        c_i(\calM_i^c(\bfc) - o)
        = c_i(\calM^g_{- i}(\bfc) - o)
        \leq c_i(\calM_i^g(\bfc))
        = c_i(\calM_{-i}^c(\bfc)),
    \end{align*}
    concluding that $\calM^c$ is EF1 for chores.

    \paragraph{$\alpha$-MMS.}
    Let $\MMS_i^g$ be the maximin share for goods with respect to the utility profile $\bfv := \bfc$.
    Suppose that $\calM^g$ is $\alpha$-MMS for goods, which implies that $c_i(\calM_i^g(\bfc))\geq \alpha \cdot \MMS_i^g$ for every $i \in N$.
    Note that $\MMS_i^g \leq \MMS_i^c$ and $\MMS_i^g+\MMS_i^c=c_i(O)$.
    As a result, for every $i \in N$,
    \begin{align*}
        \frac{c_i(\calM_i^c(\bfc))}{\MMS_i^c}=\frac{c_i(O)-c_i(\calM_i^g(\bfc))}{\MMS_i^c}
        \leq \frac{\MMS_i^c+(1-\alpha) \cdot \MMS_i^g}{\MMS_i^c}\leq 2-\alpha,
    \end{align*}
    concluding that $\calM^c$ is $(2 - \alpha)$-MMS for chores.
\end{proof}

We note here that one can similarly establish the connections for properties other than those listed in \Cref{thm:trans-indiv-two-agents} (e.g., EFX and PO), and we do not pursue this direction for the ease of presentation.
Moreover, we can analogously reduce the task of allocating goods to that of allocating chores while preserving various properties.

\subsection{Applications}

In this subsection, we discuss the applications of \Cref{thm:trans-indiv-two-agents}, primarily in the strategic setting.
Recall that the prominent characterization by \cite{DBLP:conf/sigecom/AmanatidisBCM17} states that the family of truthful mechanisms for indivisible goods with two agents coincides with the family of \emph{picking-exchange} mechanisms, which will be formally defined in \Cref{sec:indiv-char}.
By applying \Cref{thm:trans-indiv-two-agents}, we establish in \Cref{sec:indiv-char} the counterpart of the above characterization for chores, with the definition of picking-exchange mechanisms properly adapted.

\begin{theorem}[Informal]\label{thm:char-truth-chores-informal}
    For indivisible items with two agents, a mechanism is truthful for chores iff it is a picking-exchange mechanism for chores.
\end{theorem}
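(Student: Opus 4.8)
The plan is to transport the characterization of \cite{DBLP:conf/sigecom/AmanatidisBCM17} for goods to the chores setting through the bundle-swapping reduction, exploiting the fact that for two agents this reduction is an \emph{involution} on the space of mechanisms. The map sending $\calM^g$ to $\calM^c$ with $\calM^c(\bfc) = (\calM^g_2(\bfc), \calM^g_1(\bfc))$ is its own inverse, since swapping the two bundles twice is the identity; equivalently, from any chore mechanism $\calM^c$ one recovers a unique goods mechanism via $\calM^g(\bfv) = (\calM^c_2(\bfv), \calM^c_1(\bfv))$, reading the cost profile as a utility profile. Thus the swap is a bijection between two-agent mechanisms for goods and for chores, and by \Cref{thm:trans-indiv-two-agents}(1) it restricts to a bijection between those truthful for goods and those truthful for chores.

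With this bijection, I would chain three equivalences: $\calM^c$ is truthful for chores iff its swap $\calM^g$ is truthful for goods (\Cref{thm:trans-indiv-two-agents}(1)); $\calM^g$ is truthful for goods iff it is a picking-exchange mechanism for goods (the characterization of \cite{DBLP:conf/sigecom/AmanatidisBCM17}); and $\calM^g$ is a picking-exchange mechanism for goods iff its swap $\calM^c$ is a picking-exchange mechanism for chores. The last equivalence is precisely what pins down the ``properly adapted'' definition: I would define a picking-exchange mechanism for chores in \Cref{sec:indiv-char} by describing, in self-contained form, the chore-side behavior of the swap of a goods picking-exchange mechanism, so that the three equivalences compose directly into the theorem.

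The main obstacle is therefore to work out the chore-side form of each of the two components of a goods picking-exchange mechanism under the swap. For the picking component, where in a fixed turn order each agent claims its most valued remaining good, complementing the bundles turns this into the rule that, in the same order, the acting agent singles out its most costly remaining chore and that chore is assigned to the \emph{other} agent; I would verify this literal transformation and, as a sanity check, confirm its truthfulness for chores directly, noting that maximizing one's own value over a complete two-way split is the same as minimizing it over the complementary bundle. For the exchange component, where the mechanism commits to splitting a fixed family of item pairs one item per agent according to reported preferences, swapping sends each within-pair assignment to its complement, which I would show is exactly the preference-respecting chore split in the adapted definition. The delicate points are the boundary cases---ties in reported preferences, empty bundles, and items pre-committed to a single agent---where one must check that the swap preserves the precise tie-breaking conventions of the definition rather than merely its generic behavior; once every component and convention is matched, the equivalences of the previous paragraph yield the characterization.
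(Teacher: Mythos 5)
Your skeleton is exactly the paper's proof of \Cref{thm:truth-mecha-chores}: the easy direction is a direct check that picking-exchange mechanisms for chores are truthful, and for the hard direction one sets $\calM^g(\bfc) = (\calM^c_2(\bfc), \calM^c_1(\bfc))$, transfers truthfulness via \Cref{thm:trans-indiv-two-agents}(1), invokes the goods characterization of \cite{DBLP:conf/sigecom/AmanatidisBCM17}, and pushes the picking-exchange structure through the swap; your observation that the swap is an involution is correct and is implicitly how the paper gets both directions. So the architecture is sound and matches the paper.

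The genuine gap is in your third paragraph, where you translate the two components: your descriptions misread the definitions being characterized, and the ``self-contained form'' you would write down is not the right class. A picking mechanism is not a turn-order procedure in which agents claim single items; it fixes a partition $(X_1, X_2)$ and, for each $i$, a family of offers $\calP_i$ consisting of subsets of $X_i$ with $\bigcup_{T \in \calP_i} T = X_i$ and $\bigcap_{T \in \calP_i} T = \emptyset$, and agent $i$ receives a $v_i$-maximal offer from his own block. Under the swap, agent $i$ \emph{still picks from his own block}: the correct chore-side rule is $\calM^c_i(\bfc) \cap X_i \in \argmin_{S' \in \calP_i'} c_i(S')$ with $\calP_i' = \{X_i \setminus S \mid S \in \calP_i\}$, which is again a legitimate set of offers because De Morgan swaps the two offer axioms into each other; nothing is ``assigned to the other agent by the acting agent.'' Similarly, exchange deals are ordered pairs of disjoint \emph{sets} $(S_j, T_j)$ on endowments $(Y_1, Y_2)$, not item pairs split one per agent; the swap reverses each deal to $(T_j, S_j)$ and, crucially, flips the endowments to $(Y_2, Y_1)$ --- an asymmetry with the picking blocks (which do not flip) that your sketch does not record --- with favorability reversing: $(S, T)$ is favorable (resp.\ unfavorable) for goods iff $(T, S)$ is favorable (resp.\ unfavorable) for chores, so the index set $I(\bfc)$ transfers verbatim. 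Finally, the tie-breaking ``delicate points'' you flag dissolve by additivity: $c_i(X_i \setminus S) = c_i(X_i) - c_i(S)$, so complementation carries argmax-sets to argmin-sets exactly, and the definition already permits arbitrary behavior on deals that are neither favorable nor unfavorable. With these corrections your chain of equivalences is precisely the paper's argument; as written, it would characterize a different (and wrong) class of mechanisms.
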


Next, we explore the compatibility between truthfulness and fairness for chores.
We start with the application to EF1 in the following corollary, which is a direct consequence of combining \cite[Application 4.6]{DBLP:conf/sigecom/AmanatidisBCM17} and \Cref{thm:trans-indiv-two-agents}.

\begin{corollary}\label{coro:indiv-ef1}
    Assume that items are indivisible and $n = 2$.
    For $m \leq 4$, there exists a truthful and EF1 mechanism for chores.
    Moreover, for $m \geq 5$, there is no truthful and EF1 mechanism for chores.
\end{corollary}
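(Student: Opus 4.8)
The plan is to derive \Cref{coro:indiv-ef1} purely as a corollary of \Cref{thm:trans-indiv-two-agents} combined with the known result of \cite{DBLP:conf/sigecom/AmanatidisBCM17} for goods, without re-proving anything from scratch. The cited \emph{Application 4.6} of \cite{DBLP:conf/sigecom/AmanatidisBCM17} presumably establishes the analogous dichotomy for \emph{goods}: with two agents and indivisible items, a truthful and EF1 mechanism for goods exists precisely when $m \le 4$, and no such mechanism exists for $m \ge 5$. Since item 1 of \Cref{thm:trans-indiv-two-agents} says $\calM^g$ is truthful for goods iff $\calM^c$ is truthful for chores, and item 2 says $\calM^g$ is EF1 for goods iff $\calM^c$ is EF1 for chores, the map $\calM^g \mapsto \calM^c$ (bundle-swapping) is a bijection on the space of two-agent mechanisms that simultaneously preserves both properties. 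Hence the existence question for chores is equivalent to the existence question for goods on the same number of items.

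Concretely, I would proceed in two directions. For the positive part ($m \le 4$), I would take the truthful and EF1 mechanism $\calM^g$ for goods guaranteed by \cite[Application 4.6]{DBLP:conf/sigecom/AmanatidisBCM17}, define $\calM^c$ by swapping the two bundles as in the construction preceding \Cref{thm:trans-indiv-two-agents} (i.e.\ $\calM^c(\bfc) = (\calM^g_2(\bfc), \calM^g_1(\bfc))$, treating the cost profile $\bfc$ as the utility input), and invoke items 1 and 2 of \Cref{thm:trans-indiv-two-agents} to conclude that $\calM^c$ is both truthful and EF1 for chores. For the impossibility part ($m \ge 5$), I would argue by contraposition: if some truthful and EF1 mechanism $\calM^c$ for chores existed, then defining $\calM^g(\bfv) = (\calM^c_2(\bfv), \calM^c_1(\bfv))$ and applying the reverse directions of items 1 and 2 (which \Cref{thm:trans-indiv-two-agents} states are proved analogously) would yield a truthful and EF1 mechanism for goods on $m \ge 5$ items, contradicting \cite[Application 4.6]{DBLP:conf/sigecom/AmanatidisBCM17}.

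The proof is essentially a one-line application once the transfer theorem is in hand, so there is no genuine technical obstacle; the only point requiring care is bookkeeping the direction of the iff statements in \Cref{thm:trans-indiv-two-agents} and confirming that the swapping construction is a genuine involution on two-agent mechanisms, so that both the ``goods $\Rightarrow$ chores'' and ``chores $\Rightarrow$ goods'' transfers are available. One subtlety worth flagging is that \Cref{thm:trans-indiv-two-agents} only explicitly writes out the ``goods imply chores'' direction in its proof, deferring the converse to ``established analogously''; I would rely on that symmetry, noting that applying the swap a second time recovers the original mechanism, so truthfulness and EF1 are preserved in both directions. With these pieces, \Cref{coro:indiv-ef1} follows immediately from the corresponding dichotomy for goods.
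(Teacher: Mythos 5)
Your proposal is correct and takes essentially the same approach as the paper, which likewise obtains the corollary as a direct consequence of combining \cite[Application 4.6]{DBLP:conf/sigecom/AmanatidisBCM17} (the identical $m \le 4$ versus $m \ge 5$ dichotomy for goods) with both directions of the iff statements in \Cref{thm:trans-indiv-two-agents} via the bundle-swap construction. The involution property and the reverse (chores-to-goods) transfer that you flag are precisely the bookkeeping the paper leaves implicit in its one-line justification, so nothing is missing.
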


The application to MMS, which is stated in the following corollary, is a direct consequence of combining \cite[Application 4.7]{DBLP:conf/sigecom/AmanatidisBCM17} and \Cref{thm:trans-indiv-two-agents}.

\begin{corollary}\label{coro:indiv-mms}
    Assume that items are indivisible.
    For every $m$, there exists a truthful and $(2 - \lfloor \max\{2, m\} / 2 \rfloor^{-1})$-MMS mechanism for chores.
    Moreover, there is no truthful mechanism for chores that achieves a better MMS guarantee for chores.
\end{corollary}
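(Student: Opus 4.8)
The plan is to push both halves of the corresponding two-agent result for goods, namely Application~4.7 of \cite{DBLP:conf/sigecom/AmanatidisBCM17}, through the MMS transformation recorded in the third item of \Cref{thm:trans-indiv-two-agents}. Write $\alpha^\ast := \lfloor \max\{2, m\}/2 \rfloor^{-1}$. What Application~4.7 supplies is a complete answer for goods: there is a truthful mechanism $\calM^g$ for two agents that is $\alpha^\ast$-MMS for goods, and no truthful goods mechanism is $\alpha$-MMS for any $\alpha > \alpha^\ast$. Since the target chore bound is exactly $2 - \alpha^\ast$, the whole statement should fall out of the equivalence ``$\calM^g$ is $\alpha$-MMS for goods iff $\calM^c$ is $(2-\alpha)$-MMS for chores.''

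For the existence direction, I would take the truthful $\alpha^\ast$-MMS goods mechanism $\calM^g$ guaranteed by Application~4.7 and form its bundle-swapped counterpart $\calM^c$. By the first item of \Cref{thm:trans-indiv-two-agents} the swap preserves truthfulness, and by the third item with $\alpha = \alpha^\ast \in [0,1]$ it turns the $\alpha^\ast$-MMS goods guarantee into a $(2-\alpha^\ast)$-MMS chore guarantee, which is precisely the claimed bound.

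For optimality, I would argue by contraposition. Suppose some truthful chore mechanism $\calM^c$ were $\beta$-MMS for chores with $\beta < 2 - \alpha^\ast$. Applying the reverse swap yields a truthful goods mechanism $\calM^g$ (truthfulness again by the first item), and the reverse direction of the third item, used with $\alpha := 2 - \beta$, would make $\calM^g$ be $(2-\beta)$-MMS for goods. Since every chore MMS ratio satisfies $\beta \ge 1$, we have $\alpha = 2 - \beta \le 1$, so this lies in the admissible range where the equivalence applies; meanwhile $\beta < 2 - \alpha^\ast$ forces $\alpha > \alpha^\ast$. This contradicts the impossibility half of Application~4.7, so no such $\calM^c$ exists.

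The content here is entirely inherited, so I do not expect a genuine obstacle; the only point demanding care is the range condition on the transformation. The third item of \Cref{thm:trans-indiv-two-agents} is stated for $\alpha \in [0,1]$, so I must verify that the reverse step lands inside this window: this is exactly where I use that $\beta \ge 1$ (hence $2 - \beta \le 1$) together with $2 - \alpha^\ast < 2$ (which keeps $2 - \beta > 0$). I would also note that $n = 2$ is in force throughout this section, so invoking the two-agent Application~4.7 is legitimate, and that ``a better MMS guarantee for chores'' means a strictly smaller ratio, matching the direction of the contradiction above.
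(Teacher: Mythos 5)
Your proposal is correct and is exactly the paper's argument: the paper proves \Cref{coro:indiv-mms} in one line by combining Application~4.7 of \cite{DBLP:conf/sigecom/AmanatidisBCM17} with \Cref{thm:trans-indiv-two-agents}, which is precisely the two-directional swap argument you spell out. Your explicit verification that the tightness direction stays in the admissible range (using $\beta \ge 1$ for chores so that $\alpha = 2-\beta \in [0,1]$) is a detail the paper leaves implicit, and it is handled correctly.
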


In particular, Corollary~\ref{coro:indiv-mms} improves the upper bound of $4/3$ observed by \cite{DBLP:conf/ijcai/00010W19} for $n=2$ and $m=4$.
\section{Homogeneous Divisible Items}
\label{sec:div}

In this section, we consider homogeneous divisible items and assume that there are $n \geq 2$ agents.
We first generalize the reduction given by the previous section to more than two agents.
Given an arbitrary mechanism $\calM^g$ (possibly with desirable guarantees for goods), we define a new mechanism $\calM^c$ as follows.
Given as input a cost profile $\bfc$, let $\bfx^g = \calM^g(\bfc)$, and the output of $\calM^c(\bfc)$, denoted as $\bfx^c$, satisfies $x_i^c(o) = \frac{1 - x_i^g(o)}{n - 1}$ for all $i \in N$ and $o \in O$.
To see that $\bfx^c$ defined above is a feasible allocation, notice that for all $i \in N$ and $o \in O$, $x_i^c(o) \in [0, 1]$ since $x_i^g(o) \in [0, 1]$.
Moreover, for every $o \in O$,
\begin{align*}
    \sum_{i \in N} x^c_i(o)
    = \sum_{i \in N} \frac{1 - x_i^g(o)}{n - 1}
    = \frac{n - \sum_{i \in N} x_i^g(o)}{n - 1}
    = 1,
\end{align*}
concluding that $\bfx^c$ is feasible.

\begin{theorem}\label{thm:div-goods-chores}
    Let $\calM^g$ be a mechanism for homogeneous divisible items.
    The following statements hold:
    \begin{enumerate}
        \item If $\calM^g$ is truthful for goods, then $\calM^c$ is truthful for chores.

        \item If $\calM^g$ is EF for goods, then $\calM^c$ is EF for chores.

        \item If $\calM^g$ is PROP for goods, then $\calM^c$ is PROP for chores.
    \end{enumerate}
\end{theorem}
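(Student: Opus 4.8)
The plan is to reduce all three statements to a single linear identity and then read each property off by direct substitution. First I would compute, for any agents $i, j \in N$, the cost that agent $i$ incurs for the bundle $x_j^c$ produced by $\calM^c$. Using the definition $x_j^c(o) = \frac{1 - x_j^g(o)}{n - 1}$, additivity of $c_i$, and the normalization $c_i(O) = 1$, this expands to
\[
c_i(x_j^c) = \sum_{o \in O} \frac{1 - x_j^g(o)}{n - 1}\, c_i(o) = \frac{1}{n - 1}\bigl(c_i(O) - c_i(x_j^g)\bigr) = \frac{1 - c_i(x_j^g)}{n - 1},
\]
where $c_i(x_j^g)$ is the \emph{utility} agent $i$ assigns to bundle $x_j^g$ when the cost profile $\bfc$ is fed into $\calM^g$ as a utility profile (i.e.\ reading each $c_i$ as $v_i := c_i$). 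This identity is the crux: up to the strictly decreasing affine map $t \mapsto \frac{1 - t}{n - 1}$, agent $i$'s chore-cost under $\calM^c$ equals his good-utility under $\calM^g$. Because this map reverses order, every ``at least'' guarantee for goods turns into the corresponding ``at most'' guarantee for chores.

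With the identity in hand, the fairness statements are immediate. For EF, the good-EF hypothesis gives $c_i(x_i^g) \geq c_i(x_j^g)$ for all $i, j \in N$; applying the decreasing map to both sides of the identity yields $c_i(x_i^c) \leq c_i(x_j^c)$, which is exactly EF for chores. For PROP, good-PROP gives $c_i(x_i^g) \geq 1 / n$ for every $i$, and substituting into the identity gives $c_i(x_i^c) \leq \frac{1 - 1/n}{n - 1} = \frac{1}{n}$, which is PROP for chores.

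For truthfulness I would apply the identity twice. Fix an agent $i$ and a misreport $c_i'$, and let $x_i^g$ and $\tilde x_i^g$ denote the bundles $\calM^g$ assigns to agent $i$ on inputs $\bfc$ and $(c_i', c_{-i})$, respectively, with $x_i^c, \tilde x_i^c$ the corresponding chore bundles. Evaluating both with the \emph{true} cost function $c_i$, the identity gives $c_i(x_i^c) = \frac{1 - c_i(x_i^g)}{n - 1}$ and $c_i(\tilde x_i^c) = \frac{1 - c_i(\tilde x_i^g)}{n - 1}$. Truthfulness of $\calM^g$ for goods---with $c_i$ in the role of agent $i$'s true utility---asserts $c_i(x_i^g) \geq c_i(\tilde x_i^g)$, and the decreasing map converts this into $c_i(x_i^c) \leq c_i(\tilde x_i^c)$, so deviating never lowers agent $i$'s cost and $\calM^c$ is truthful for chores.

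I do not anticipate a genuine obstacle, as this is a clean order-reversing reduction. The one point requiring care is the bookkeeping in the truthfulness argument: one must ensure that the \emph{true} cost function $c_i$ (not the misreport $c_i'$) is used to evaluate both $x_i^g$ and $\tilde x_i^g$ when applying the identity, and that $\calM^g$'s truthfulness is invoked with $c_i$ playing the role of the true utility. I would also flag the normalization $c_i(O) = 1$ explicitly, since without it the constant term in the identity would not collapse to the $1/n$ threshold needed in the PROP computation.
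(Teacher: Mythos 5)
Your proposal is correct and follows essentially the same route as the paper's proof: both rest on the affine identity $c_i(x_j^c) = \frac{c_i(O) - c_i(x_j^g)}{n-1}$ (the paper keeps $c_i(O)$ unnormalized, which is immaterial since costs are normalized so that $c_i(O) = 1$), and both read truthfulness, EF, and PROP for chores off the corresponding goods guarantees via the order-reversing map, with the same careful use of the true $c_i$ in the truthfulness step. No gaps.
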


\begin{proof}
    Fix a cost profile $\bfc$, and we examine each property separately.
    \paragraph{Truthfulness.}
    Assuming that $\calM^g$ is truthful for goods, and agent $i$ manipulates his cost function as $c_i'$, it holds that
    \begin{align*}
        c_i(\calM_i^c(c_i', c_{-i}))
        &= \frac{c_i(O) - c_i(\calM_i^g(c_i', c_{-i}))}{n - 1}  \\
        &\geq \frac{c_i(O) - c_i(\calM_i^g(\bfc))}{n - 1}
        = c_i(\calM^c_i(\bfc)),
    \end{align*}
    where the inequality holds due to the truthfulness of $\calM^g$ for goods.
    Hence, $\calM^c$ is truthful for chores.

    \paragraph{EF.}
    Suppose that $\calM^g$ is EF for goods, which implies that $c_i(\calM^g_i(\bfc)) \geq c_i(\calM^g_j(\bfc))$ for all $i, j \in N$.
    As a result,
    \begin{align*}
        c_i(\calM_i^c(\bfc))
        = \frac{c_i(O) - c_i(\calM_i^g(\bfc))}{n - 1}
        \leq \frac{c_i(O) - c_i(\calM^g_j(\bfc))}{n - 1}
        = c_i(\calM_j^c(\bfc)),
    \end{align*}
    concluding that $\calM^c$ is EF for chores.

    \paragraph{PROP.}
    Suppose that $\calM^g$ is PROP for goods, which implies that $c_i(\calM_i^g(\bfc)) \geq c_i(O) / n$ for every $i \in N$.
    As a result,
    \begin{align*}
        c_i(\calM_i^c(\bfc))
        = \frac{c_i(O) - c_i(\calM_i^g(\bfc))}{n - 1}
        \leq \frac{c_i(O) - c_i(O) / n}{n - 1}
        = \frac{c_i(O)}{n},
    \end{align*}
    concluding that $\calM^c$ is PROP for chores.
\end{proof}

We remark here that the transformation provided in this section does not preserve PO when $n \geq 3$.
To illustrate, suppose that there are $n = 3$ agents and $m = 2$ items, and the cost functions satisfy $c_1(o_1) = c_2(o_1) = c_3(o_2) = 1$ and $c_1(o_2) = c_2(o_2) = c_3(o_1) = 0$.
Let $\calM^g$ be a mechanism that allocates $o_1$ to agent $1$ and $o_2$ to agent $3$, which results in a PO allocation for goods.
However, $\calM^c$ will allocate half of $o_1$ to agents $2$ and $3$, respectively, which is not PO for chores since allocating the entire $o_1$ to agent $3$ instead would lead to a Pareto improvement.

We mention a few direct applications of \Cref{thm:div-goods-chores}.
Recall that \cite{freeman2023equivalence} establish an equivalence between the class of fair allocation mechanisms and the class of \emph{weakly budget-balanced} wagering mechanisms, which results in a family of (strictly) truthful, PROP, and EF mechanisms for goods.
Moreover, \cite{DBLP:conf/wine/AzizY14} propose the \emph{Constrained Serial Dictatorship} mechanism, which is truthful and PROP for goods.
Notice that the proof of truthfulness in \Cref{thm:div-goods-chores} can be easily extended to show that strict truthfulness is also preserved.
Therefore, we can apply \Cref{thm:div-goods-chores} to the aforementioned mechanisms, leading to new mechanisms with the same (strict) truthfulness and fairness guarantees for chores.

\subsection{Characterization}

In this subsection, we restrict our attention to the case of $n = 2$ agents.
Recently, \cite{freeman2023equivalence} prove that all truthful mechanisms for goods are \emph{swap-dictatorial} subject to mild technical conditions.
With the aid of \Cref{thm:div-goods-chores}, we generalize the above characterization to chores.

We start by introducing some necessary definitions.

\begin{definition}
    A mechanism $\calM$ is \emph{anonymous} if for every utility profile $\bfv = (v_1, v_2)$, $\calM_1(v_1, v_2) = \calM_2(v_2, v_1)$ and $\calM_2(v_1, v_2) = \calM_1(v_2, v_1)$.
    The same definition holds for cost profiles.
\end{definition}

\begin{definition}
    A \emph{swap-dictatorial} mechanism $\calM$ for homogeneous divisible goods (resp. chores) is defined by a (potentially infinite) set of bundles $D \subseteq [0, 1]^m$ such that for every valuation profile $\bfv$ (resp. cost profile $\bfc$), $\calM_1(\bfv) = (x_1 + \mathbf{1}_O - x_2) / 2$ (resp. $\calM_1(\bfc) = (x_1 + \mathbf{1}_O - x_2) / 2$), where $x_1 \in \argmax_{x \in D} v_1(x)$ (resp. $x_1 \in \argmin_{x \in D} c_1(x)$) and $x_2 \in \argmax_{x \in D} v_2(x)$ (resp. $x_2 \in \argmin_{x \in D} c_2(x)$) with ties broken arbitrarily.
\end{definition}

As an intuitive interpretation of a swap-dictatorial mechanism, each agent has an equal probability of being selected as the dictator, who is allowed to receive his favorite bundle among $D$, with the other agent receiving the remaining items, and the outcome of the mechanism is the expectation over two possible choices of the dictator.
We emphasize here that although we give an interpretation with a randomized implementation, every swap-dictatorial mechanism for goods (resp. chores) is deterministic, and it is easy to verify by definition that it further satisfies truthfulness for goods (resp. chores).

\begin{definition}
        A mechanism $\calM$ for homogeneous divisible items is \emph{smooth} if, for every $i \in N$, $\calM_i(\bfv)$ is twice continuously differentiable with respect to every utility function $v_j$, $j \in N$.\footnote{Here, we encode a utility function $v_j$ by the vector $(v_j(o_1), \ldots, v_j(o_m)) \in [0, 1]^m$, and we say that a condition holds with respect to a utility function if it holds with respect to the corresponding vector in $[0, 1]^m$.}
        The same definition holds for cost profiles.
\end{definition}

With the above definitions, we are ready to present the characterization of truthful mechanisms for goods by \cite{freeman2023equivalence}.

\begin{theorem}[Theorem 6 in \cite{freeman2023equivalence}]\label{thm:divi-char-goods}
    For homogeneous divisible items with two agents, an anonymous and smooth mechanism is truthful for goods iff it is swap-dictatorial for goods.
\end{theorem}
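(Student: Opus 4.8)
The plan is to prove the two directions of the equivalence separately, with essentially all the work in the forward (``truthful $\Rightarrow$ swap-dictatorial'') direction. The reverse direction is the one already flagged in the text: fixing agent $2$'s report, the bundle $x_2 \in \argmax_{x \in D} v_2(x)$ is independent of agent $1$'s report, so agent $1$'s utility $v_1(\calM_1(\bfv)) = \tfrac12\big(v_1(x_1) + v_1(\mathbf{1}_O - x_2)\big)$ depends on his report only through $v_1(x_1)$, which is maximized by truthful reporting by the definition of $x_1 = \argmax_{x\in D} v_1(x)$; anonymity and smoothness of the construction are immediate. So I would focus on showing that every anonymous, smooth, truthful mechanism has the swap-dictatorial form.

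First I would set up the single-agent view. Since valuations are additive and normalized, I encode each report as a point of the simplex $\{v \in \R_{\geq 0}^m : \sum_{o} v(o) = 1\}$, so that the utility an agent derives is the linear form $v_i(\calM_i(\bfv)) = \langle v_i, \calM_i(\bfv)\rangle$. Fixing $v_2$ and viewing agent $1$ alone, truthfulness says the indirect utility $U(v_1) := \langle v_1, \calM_1(v_1, v_2)\rangle$ is a pointwise maximum of linear functions of $v_1$, hence convex, and by the envelope theorem together with smoothness, $\calM_1(v_1, v_2) = \nabla_{v_1} U(v_1)$; in particular the Jacobian $\partial \calM_1/\partial v_1$ is symmetric and positive semidefinite. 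Equivalently, by the taxation principle, there is a menu $D_1(v_2)$ of bundles with $\calM_1(v_1, v_2) = \argmax_{x \in D_1(v_2)} \langle v_1, x\rangle$. Applying the same analysis to agent $2$ and using feasibility $\calM_2 = \mathbf{1}_O - \calM_1$, I get symmetrically that $\partial \calM_1/\partial v_2$ is symmetric negative semidefinite, together with a menu $D_2(v_1)$ for agent $2$.

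The heart of the argument---and the step I expect to be the main obstacle---is to show that the menu $D_1(v_2)$ does not genuinely depend on $v_2$, i.e., that the allocation decouples additively as $\calM_1(v_1, v_2) = a(v_1) + b(v_2)$. Here I would exploit smoothness to run an integrability argument: agent $1$'s allocation is the $v_1$-gradient of the convex potential $U$, while by feasibility it also equals $\mathbf{1}_O$ minus the $v_2$-gradient of agent $2$'s convex potential $W(v_1,v_2) = \langle v_2, \calM_2(v_1,v_2)\rangle$, so $\nabla_{v_1} U + \nabla_{v_2} W = \mathbf{1}_O$. Combining this with anonymity---which yields $\calM_1(v_1,v_2) = \mathbf{1}_O - \calM_1(v_2,v_1)$ and hence a Jacobian identity relating $\partial_{v_2}\calM_1(v_1,v_2)$ to the first-slot derivative of $\calM_1$ at $(v_2,v_1)$---and with the symmetry of both Hessian blocks, I would argue that the mixed derivative $\partial_{v_1}\partial_{v_2}\calM_1$ vanishes, forcing the additive split. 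This is the delicate part, since it is exactly where the two one-sided convexity structures must be shown to be mutually compatible only in the separable form.

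Once separability is in hand, feasibility and anonymity pin down the shape quickly. Writing $\calM_1(v_1,v_2) = a(v_1) + b(v_2)$, the identity $\calM_1(v_1,v_2) + \calM_1(v_2,v_1) = \mathbf{1}_O$ (from anonymity together with $\calM_1+\calM_2 = \mathbf{1}_O$) gives $a(v) + b(v) = \tfrac12\mathbf{1}_O$ for all $v$, so $\calM_1(v_1,v_2) = \tfrac12\big(g(v_1) + \mathbf{1}_O - g(v_2)\big)$ with $g := 2a$. Finally, since for fixed $v_2$ the map $\calM_1(\cdot, v_2) = \tfrac12 g(\cdot) + \mathrm{const}$ is the gradient of the convex $U$, the function $g$ is itself the gradient of a convex (support) function, so $g(v) = \argmax_{x \in D}\langle v, x\rangle$ for $D$ the closed convex image of $g$, with the arbitrary tie-breaking absorbing non-uniqueness on the boundary. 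This is precisely the swap-dictatorial form with menu $D$. Throughout, the one recurring technical nuisance---rather than a conceptual one---is that reports lie on the simplex, so gradients are determined only modulo the $\mathbf{1}_O$ direction; I would handle this by working with directional derivatives tangent to the simplex and fixing the normalization $v_i(O)=1$.
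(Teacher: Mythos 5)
A preliminary but important point: the paper does not prove this statement at all --- it is imported verbatim as Theorem~6 of \cite{freeman2023equivalence} and used as a black box (to derive \Cref{thm:divi-char-chores} and \Cref{thm:pot-homo-chores}), so there is no in-paper proof to compare your attempt against; I can only assess it on its own terms. On those terms, your easy direction is fine in substance, with one slip: smoothness of a swap-dictatorial mechanism is \emph{not} ``immediate'' --- an $\argmax$ selection from a menu $D$ is generically discontinuous in the reports --- but it is also not needed, since the theorem quantifies over anonymous and smooth mechanisms and the reverse direction only requires swap-dictatorial $\Rightarrow$ truthful.

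The genuine gap is exactly where you flag it: the separability step $\calM_1(v_1,v_2)=a(v_1)+b(v_2)$ is asserted, not proven, and the ingredients you propose to combine do not yield it. The identity $\nabla_{v_1}U+\nabla_{v_2}W=\mathbf{1}_O$ is just feasibility $\calM_1+\calM_2=\mathbf{1}_O$ restated and holds for \emph{every} feasible mechanism, truthful or not, so it carries no incentive information; and anonymity, differentiated, only relates the second-slot Jacobian at $(v_1,v_2)$ to the first-slot Jacobian at the swapped point $(v_2,v_1)$ --- a relation between different points, not a pointwise differential constraint, so it cannot by itself kill the mixed partial. What would actually do work --- and what you never write down --- is cross-differentiating the two first-order conditions of truthfulness, $(\partial_{v_1}\calM_1)^{\top}v_1=0$ and, via $\calM_2=\mathbf{1}_O-\calM_1$, $(\partial_{v_2}\calM_1)^{\top}v_2=0$, which gives, for the mixed-partial tensor $T^o_{jk}=\partial_{v_1^j}\partial_{v_2^k}\calM_1^o$, the identities $\sum_o v_1^o\,T^o_{jk}=\sum_o v_2^o\,T^o_{jk}=0$. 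For $m=2$ this already forces $T=0$ off the diagonal $v_1=v_2$ (hence everywhere by continuity, which is where smoothness enters), and one can check that anonymity then pins down the form $\calM_1=\tfrac12\bigl(g(v_1)+\mathbf{1}_O-g(v_2)\bigr)$ exactly as you say. But for $m\geq 3$ these identities only show that the vector $(T^o_{jk})_o$ is orthogonal to the plane spanned by $v_1$ and $v_2$, leaving an $(m-2)$-dimensional residual; closing it requires substantially more (higher-order differentiation of the incentive identities, the Hessian symmetries, and anonymity used in concert), and that is precisely where the technical content of Freeman et al.'s theorem lives. A further fixable detail at the end: from $\calM_1\in[0,1]^m$ one only gets an oscillation bound $|g_o(v)-g_o(v')|\leq 1$, so $g$ must be shifted coordinatewise (which leaves $\calM_1$ unchanged) before its closed image can serve as a menu $D\subseteq[0,1]^m$ of feasible bundles.
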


Next, we provide the characterization of truthful mechanisms for chores by combining \Cref{thm:div-goods-chores} and \Cref{thm:divi-char-goods}.

\begin{theorem}\label{thm:divi-char-chores}
    For homogeneous divisible items with two agents, an anonymous and smooth mechanism is truthful for chores iff it is swap-dictatorial for chores.
\end{theorem}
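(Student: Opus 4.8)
The plan is to prove both directions by transporting the goods characterization (\Cref{thm:divi-char-goods}) across the swap transformation. The backward direction is immediate: every swap-dictatorial mechanism for chores is truthful for chores, as already noted in the discussion following the definition of swap-dictatorial mechanisms. So it remains to prove the forward direction, namely that an anonymous, smooth, and truthful-for-chores mechanism $\calM^c$ must be swap-dictatorial for chores.

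The first observation is that for $n = 2$ the transformation of \Cref{thm:div-goods-chores} degenerates into a pure bundle swap: since $x_1^g(o) + x_2^g(o) = 1$, we have $x_i^c(o) = 1 - x_i^g(o) = x_{3-i}^g(o)$, so $\calM_1^c(\bfc) = \mathbf{1}_O - \calM_1^g(\bfc)$ and the map $\calM^g \mapsto \calM^c$ is an involution. Given an anonymous, smooth, truthful-for-chores mechanism $\calM^c$, I would define $\calM^g$ as its swap (treat the cost profile as a utility profile, apply $\calM^c$, and exchange the two bundles) and verify three preservation facts. For truthfulness, the equality chain in the truthfulness part of \Cref{thm:div-goods-chores} specializes to the identity $c_i(\calM_i^c(\bfc)) = c_i(O) - c_i(\calM_i^g(\bfc))$, valid on every input; hence the chore inequality $c_i(\calM_i^c(\bfc)) \le c_i(\calM_i^c(c_i', c_{-i}))$ is logically equivalent to the goods inequality $c_i(\calM_i^g(\bfc)) \ge c_i(\calM_i^g(c_i', c_{-i}))$, giving truthfulness of $\calM^c$ for chores iff truthfulness of $\calM^g$ for goods. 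Anonymity of $\calM^g$ unfolds to $\calM_2^c(v_1, v_2) = \calM_1^c(v_2, v_1)$, which is exactly anonymity of $\calM^c$. Smoothness is preserved because $\calM_i^g(\bfv) = \mathbf{1}_O - \calM_i^c(\bfv)$ differs from $\calM_i^c$ by a constant. Thus $\calM^g$ is anonymous, smooth, and truthful for goods, and \Cref{thm:divi-char-goods} furnishes a set $D \subseteq [0,1]^m$ witnessing that $\calM^g$ is swap-dictatorial for goods.

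The final step translates the witness back to chores. I would set $D' := \{\mathbf{1}_O - x : x \in D\}$. For any cost function $c_i$ one has $c_i(\mathbf{1}_O - x) = c_i(O) - c_i(x)$, so the minimizers of $c_i$ over $D'$ are exactly the complements of the maximizers of $c_i$ over $D$, and choosing the $D'$ tie-breaking to be the complement of the $D$ tie-breaking keeps the selections consistent. Writing $x_1, x_2$ for the goods-dictator choices over $D$ and $y_i = \mathbf{1}_O - x_i$ for the corresponding chore choices over $D'$, a short computation gives $\calM_1^c(\bfc) = \mathbf{1}_O - \calM_1^g(\bfc) = \mathbf{1}_O - (x_1 + \mathbf{1}_O - x_2)/2 = (y_1 + \mathbf{1}_O - y_2)/2$, which is precisely the swap-dictatorial-for-chores formula with the set $D'$. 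Hence $\calM^c$ is swap-dictatorial for chores, completing the forward direction.

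I expect the main obstacle to be this last step: one must check that the complementation $D \mapsto \mathbf{1}_O - D$ carries the $\argmax$ structure for goods to the $\argmin$ structure for chores, that the algebra of the swap-dictatorial formula closes up exactly under $y_i = \mathbf{1}_O - x_i$, and that the arbitrary tie-breaking allowed in the definition can be chosen compatibly across the two problems. By contrast, the involution identity for $n = 2$ and the resulting preservation of anonymity, smoothness, and (as an iff) truthfulness are comparatively routine once the swap identity $\calM_1^c(\bfc) = \mathbf{1}_O - \calM_1^g(\bfc)$ is established.
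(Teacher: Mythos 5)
Your proposal is correct and follows essentially the same route as the paper's proof: swap the two bundles to obtain a goods mechanism $\calM^g$ from $\calM^c$, verify that anonymity, smoothness, and truthfulness carry over, invoke \Cref{thm:divi-char-goods} to obtain the dictator set $D$, and translate back via $D' = \{\mathbf{1}_O - x \mid x \in D\}$ with the identity $\calM_1^c(\bfc) = \mathbf{1}_O - \calM_1^g(\bfc)$ converting $\argmax$ over $D$ into $\argmin$ over $D'$. You in fact spell out a few steps the paper leaves implicit (the involution structure at $n=2$, the preservation of anonymity and smoothness, and tie-breaking compatibility), but the argument is the same.
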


\begin{proof}
    According to previous discussions, every swap-dictatorial mechanism for chores is truthful for chores, and it remains to show that every anonymous, smooth, and truthful mechanism for chores is swap-dictatorial for chores.
    Let $\calM^c$ be an anonymous, smooth, and truthful mechanism for chores, and let $\calM^g$ be the mechanism satisfying $\calM^g(\bfc) = (\calM^c_2(\bfc), \calM^c_1(\bfc))$ for every cost profile $\bfc$.
    Notice that $\calM^g$ is anonymous and smooth, and $\calM^g$ further satisfies truthfulness for goods by \Cref{thm:div-goods-chores}.
    As a result, $\calM^g$ is swap-dictatorial for goods by \Cref{thm:divi-char-goods}, which implies that there exists a set of bundles $D \subseteq [0, 1]^m$ such that for every cost profile $\bfc = (c_1, c_2)$, $\calM_1^g(\bfc) = (x_1 + \mathbf{1}_O - x_2) / 2$ with $x_1 \in \argmax_{x \in D} c_1(x)$ and $x_2 \in \argmax_{x \in D} c_2(x)$.
    Let $D' = \{\mathbf{1}_O - x \mid x \in D\}$, and it holds that
    \begin{align*}
        \calM_1^c(\bfc)
        &= \mathbf{1}_O - \calM_1^g(\bfc)
        = \mathbf{1}_O - \frac{x_1 + \mathbf{1}_O - x_2}{2}\\
        &= \frac{(\mathbf{1}_O - x_1) + \mathbf{1}_O - (\mathbf{1}_O - x_2)}{2}.
    \end{align*}
    Notice that $\mathbf{1}_O - x_1 \in \argmin_{x \in D'} c_1(x)$ by the definition of $D'$ and the fact that $x_1 \in \argmax_{x \in D} c_1(x)$, and, similarly, $\mathbf{1}_O - x_2 \in \argmin_{x \in D'} c_2(x)$, implying that $\calM^c$ is swap-dictatorial for chores.
\end{proof}

\subsection{Efficiency Ratio}

In this subsection, we assume that there are $n = 2$ agents and focus on truthful mechanisms.
It is easy to show that allocating each item equally to agents achieves an efficiency ratio of $0.5$ for goods.
Moreover, \cite{DBLP:conf/atal/GuoC10} show that the family of \emph{increasing-price} mechanisms cannot achieve an efficiency ratio better than $0.5$ for goods, and the negative result is later generalized to all swap-dictatorial mechanisms by \cite{DBLP:conf/wine/HanSTZ11}.
The characterization by \cite{freeman2023equivalence} (\Cref{thm:divi-char-goods}), together with the negative result of \cite{DBLP:conf/wine/HanSTZ11}, further confirms $0.5$ to be the best efficiency ratio achievable for goods (subject to smoothness).\footnote{Mechanisms with an efficiency ratio better than $0.5$ exist when allowing some items to remain unallocated \cite{DBLP:conf/atal/ColeGG13,DBLP:conf/ijcai/Cheung16}.}
We apply the characterization by \Cref{thm:divi-char-chores} and the hard instances in the proof of \cite[Theorem 4]{DBLP:conf/wine/HanSTZ11} to show that every truthful mechanism for chores has an efficiency ratio of $0$ for chores subject to smoothness.

\begin{theorem}\label{thm:pot-homo-chores}
    For homogeneous divisible items with two agents, if a mechanism $\calM$ is smooth and truthful for chores, then $\calM$ has an efficiency ratio of $0$ for chores.
\end{theorem}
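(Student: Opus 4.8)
The plan is to transfer the question to the goods side via the $n = 2$ complementation and then import the known tightness of the goods efficiency ratio. First I would set $\calM^g_i(\bfc) := \mathbf{1}_O - \calM_i(\bfc)$ and, treating the normalized cost profile as a goods valuation profile $\bfv := \bfc$, observe that $\calM^g$ is smooth and truthful for goods: smoothness is immediate since $\calM^g_i = \mathbf{1}_O - \calM_i$, and truthfulness is the $n = 2$ instance of \Cref{thm:div-goods-chores} read in the reverse (chores-to-goods) direction, exactly as in the proof of \Cref{thm:divi-char-chores}. Invoking the characterization \Cref{thm:divi-char-goods}, $\calM^g$ is swap-dictatorial, so there is a set $D$ with $\calM^g_1(\bfv) = (x_1 + \mathbf{1}_O - x_2) / 2$ for $x_1 \in \argmax_{x \in D} v_1(x)$ and $x_2 \in \argmax_{x \in D} v_2(x)$.

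Next I would record two exact identities. Because $n = 2$ and each item's two costs satisfy $\min + \max = \text{sum}$, summing over items gives $\SW^c(\bfc) + \SW^g(\bfv) = c_1(O) + c_2(O) = 2$, i.e.\ $\SW^c(\bfc) = 2 - \SW^g(\bfv)$. Because $\calM_i = \mathbf{1}_O - \calM^g_i$ and $c_i(O) = 1$, we likewise get $\SW(\calM(\bfc)) = 2 - \SW(\calM^g(\bfv))$. Hence
\[
    \frac{\SW^c(\bfc)}{\SW(\calM(\bfc))} = \frac{2 - \SW^g(\bfv)}{2 - \SW(\calM^g(\bfv))}.
\]
I would then isolate two universal bounds on the goods side: $\SW^g(\bfv) \leq 2$ always, and $\SW(\calM^g(\bfv)) = 1 + \tfrac12\big(v_1(x_1) - v_1(x_2) + v_2(x_2) - v_2(x_1)\big) \geq 1$, where the inequality follows from $x_1, x_2$ being the two agents' respective favourite bundles in $D$.

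Finally I would feed in the hard instances from \cite[Theorem 4]{DBLP:conf/wine/HanSTZ11}. Applied to the swap-dictatorial $\calM^g$, these yield a sequence of normalized valuation profiles $\bfv^{(k)}$ driving the goods efficiency ratio $\SW(\calM^g(\bfv^{(k)})) / \SW^g(\bfv^{(k)})$ down to its optimal value $1/2$. Combining this with the two universal bounds pins down the limiting behaviour: since $\SW(\calM^g) \geq 1$ and $\SW^g \leq 2$, a ratio tending to $1/2$ forces $\SW^g(\bfv^{(k)}) \to 2$ and $\SW(\calM^g(\bfv^{(k)})) \to 1$. Reading the displayed identity along this sequence (with $\bfc = \bfv^{(k)}$) gives $\SW^c(\bfc) / \SW(\calM(\bfc)) \to 0/1 = 0$, so the infimum defining the chores efficiency ratio is $0$.

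I expect the main obstacle to be this last step: confirming that the HanSTZ instances really push the optimal goods welfare all the way to its maximum $2$ (equivalently, the optimal chore cost to $0$) rather than merely forcing the ratio to $1/2$. The clean route is the squeeze just described --- the two universal bounds convert ``ratio $\to 1/2$'' into the two separate limits $\SW^g \to 2$ and $\SW(\calM^g) \to 1$ --- but one must verify that the hard instances are expressible in terms of the mechanism's set $D$ and that smoothness is precisely what makes their construction go through for an arbitrary $D$. A secondary point to handle is that \Cref{thm:divi-char-goods} is stated for anonymous mechanisms, so either anonymity should be assumed or the hard-instance construction extended directly to all smooth truthful mechanisms.
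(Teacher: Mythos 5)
Your proposal is correct in its core mechanism, but it reaches the conclusion by a genuinely different route from the paper. The paper does \emph{not} use the statement of \cite[Theorem 4]{DBLP:conf/wine/HanSTZ11} as a black box: after symmetrizing (\Cref{lem:anony-item-symme}) and invoking the chores-side characterization (\Cref{thm:divi-char-chores}), it re-runs the hard-instance construction directly on the chores side --- a sequence of two-block cost profiles $\bfc^{(i)}$ of halving size, the closure property of the dictator set $D$ under permutations (\Cref{lem:symmetric-SD}), and a telescoping inequality on the fraction $a^{(i)}$ that contradicts $a^{(k)} \geq 0$ for any efficiency ratio $\delta > 0$. Your route replaces all of that instance-level work with a squeeze: complement to goods, characterize $\calM^g$ as swap-dictatorial via \Cref{thm:divi-char-goods}, and then exploit the identities $\SW^c(\bfc) = 2 - \SW^g(\bfv)$ and $\SW(\calM(\bfc)) = 2 - \SW(\calM^g(\bfv))$ together with the universal bounds $\SW^g(\bfv) \leq 2$ and $\SW(\calM^g(\bfv)) \geq 1$. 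The second bound --- which you verify correctly from $x_1 \in \argmax_{x \in D} v_1(x)$, $x_2 \in \argmax_{x \in D} v_2(x)$, and the normalization $v_i(O) = 1$ --- is the key observation absent from the paper: it shows every swap-dictatorial goods mechanism automatically has ratio at least $1/2$, so the HanSTZ upper bound of $1/2$ is attained as an infimum, and any $\bfv$ with ratio at most $1/2 + \epsilon$ forces $2 - \SW^g(\bfv) \leq 4\epsilon/(1+2\epsilon)$ and $2 - \SW(\calM^g(\bfv)) \geq 1 - 2\epsilon$, whence the chores ratio at $\bfc = \bfv$ is $O(\epsilon)$. This is exactly why your worry in the last paragraph is unfounded: you never need to check that the HanSTZ instances ``push $\SW^g$ to $2$,'' since the squeeze extracts that from the bare statement of their theorem (which, as the paper notes, holds for all swap-dictatorial mechanisms; smoothness plays no role there, only in the characterization). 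What your approach buys is brevity and modularity; what the paper's buys is self-containment (no reliance on the exact scope of the cited theorem) and explicit hard instances for chores.

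The one genuine loose end is the one you flag but do not close: \Cref{thm:divi-char-goods} requires anonymity, and $\calM$ (hence your $\calM^g$) need not be anonymous. The fix is precisely the paper's \Cref{lem:anony-item-symme}: symmetrize $\calM$ to an anonymous (and item-symmetric) truthful mechanism whose chores efficiency ratio is at least that of $\calM$; since the symmetrized mechanism is a finite average of compositions of $\calM$ with fixed permutations, it is still smooth, so your argument applies to it, and ratio zero for the symmetrized mechanism forces ratio zero for $\calM$. With that lemma inserted up front (it also dispels any residual doubt about whether \cite{DBLP:conf/wine/HanSTZ11} needs item-symmetry, since the symmetrized mechanism has it), your proof is complete.
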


It is worth mentioning that \cite{DBLP:journals/scw/BeiHS20} show that in \emph{cake-cutting} with two agents, the efficiency ratio of every EF and truthful mechanism for chores has an efficiency ratio of $0$ for chores.
However, \Cref{thm:pot-homo-chores} is arguably stronger than their result since (1) agents have less room to manipulate for homogeneous divisible items than in cake-cutting, and (2) \Cref{thm:pot-homo-chores} holds even for non-EF mechanisms.

\begin{proof}[Proof of \Cref{thm:pot-homo-chores}]
    We say that a mechanism $\calM$ is \emph{item-symmetric} if when the values of two items are swapped by every agent, the allocations of these two items are also swapped.
Recall that \cite[Claim 1]{DBLP:conf/atal/GuoC10} shows that for every truthful mechanism with an efficiency ratio of $\delta$ for goods, there exists a corresponding anonymous, item-symmetric, and truthful mechanism with an efficiency ratio of at least $\delta$ for goods.
We prove a counterpart of this statement for chores in the following lemma.
Throughout the proof, we use $\Sym(O)$ to denote the set of all permutations of $O$.
Moreover, given $\sigma \in \Sym(O)$ and a cost function $c_i$, let $c_i^{\sigma}$ denote the cost function satisfying $c_i^{\sigma}(\sigma(o)) = c_i(o)$ for every $o \in O$.

\begin{lemma}\label{lem:anony-item-symme}
    For homogeneous divisible items with two agents, given a truthful mechanism with an efficiency ratio of $\delta$ for chores, there exists a corresponding anonymous, item-symmetric, and truthful mechanism with an efficiency ratio of at least $\delta$ for chores.
\end{lemma}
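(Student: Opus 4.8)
The plan is to produce the required mechanism by symmetrizing the given truthful mechanism $\calM$ first over the agents and then over the items. Two features of the homogeneous divisible setting make this work: the set of feasible allocations is convex, so any convex combination of allocations is again a feasible allocation; and costs are additive, so the cost of an averaged bundle is the average of the costs. The second feature immediately implies that a convex combination of truthful mechanisms is truthful, since each agent's incentive inequality is an inequality between affine functions of the allocation and hence survives averaging, and the same linearity shows that the social welfare of an average equals the average of the social welfares. I would exploit these facts at each of the two symmetrization stages.

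First I would enforce anonymity. Define $\widetilde\calM$ by $\widetilde\calM_1(c_1,c_2)=\tfrac12\bigl(\calM_1(c_1,c_2)+\calM_2(c_2,c_1)\bigr)$ and $\widetilde\calM_2(c_1,c_2)=\tfrac12\bigl(\calM_2(c_1,c_2)+\calM_1(c_2,c_1)\bigr)$. It is immediate that the two bundles sum to $\mathbf 1_O$ and that $\widetilde\calM_1(c_1,c_2)=\widetilde\calM_2(c_2,c_1)$, so $\widetilde\calM$ is a valid anonymous mechanism. For truthfulness, when agent $1$ with true cost $c_1$ reports $c_1'$, his realized cost is the average of $c_1(\calM_1(c_1',c_2))$ and $c_1(\calM_2(c_2,c_1'))$; the first term is controlled by truthfulness of $\calM$ for agent $1$ in the profile $(c_1,c_2)$, and the second by truthfulness of $\calM$ for the \emph{second} agent in the profile $(c_2,c_1)$, whose true cost is $c_1$. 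A direct computation gives $\SW(\widetilde\calM(c_1,c_2))=\tfrac12\SW(\calM(c_1,c_2))+\tfrac12\SW(\calM(c_2,c_1))$; since $\SW^c$ is symmetric in the agents, bounding each term by the efficiency ratio $\delta$ of $\calM$ yields $\SW^c(\bfc)\ge\delta\cdot\SW(\widetilde\calM(\bfc))$, so $\widetilde\calM$ retains efficiency ratio at least $\delta$.

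Next I would enforce item-symmetry. For each $\sigma\in\Sym(O)$, define the conjugate mechanism $\widetilde\calM^\sigma$ by relabeling the reported costs through $\sigma$, running $\widetilde\calM$, and relabeling the allocation back, i.e.\ $\widetilde\calM^\sigma_i(\bfc)(o)=\widetilde\calM_i(c_1^\sigma,c_2^\sigma)(\sigma(o))$. The crucial identity, obtained by the substitution $o'=\sigma(o)$, is $c_i(\widetilde\calM^\sigma_i(\bfc))=c_i^\sigma(\widetilde\calM_i(c_1^\sigma,c_2^\sigma))$; from it, truthfulness of $\widetilde\calM^\sigma$ follows from that of $\widetilde\calM$ (as $(c_i')^\sigma$ ranges over all cost functions), while $\SW(\widetilde\calM^\sigma(\bfc))=\SW(\widetilde\calM(c_1^\sigma,c_2^\sigma))$ together with $\SW^c(\bfc)=\SW^c(c_1^\sigma,c_2^\sigma)$ shows the efficiency ratio is preserved. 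Each $\widetilde\calM^\sigma$ is still anonymous since $\widetilde\calM$ is. I then take $\widehat\calM=\tfrac1{|\Sym(O)|}\sum_{\sigma\in\Sym(O)}\widetilde\calM^\sigma$, which is anonymous, truthful, and of efficiency ratio at least $\delta$ as an average of such mechanisms. Item-symmetry then follows from $\Sym(O)$-equivariance: using $(c^\tau)^\sigma=c^{\sigma\tau}$ and re-indexing the sum by $\rho=\sigma\tau$ gives $\widehat\calM_i(c^\tau)(\tau(o))=\widehat\calM_i(c)(o)$ for every $\tau\in\Sym(O)$, which specializes to the transposition property in the definition of item-symmetry.

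The main obstacle is checking that truthfulness is preserved, and the delicate point lies in the anonymization stage: a deviation by agent $1$ influences the averaged allocation through a term in which $1$'s report occupies the \emph{second} coordinate of a swapped profile, so one must appeal to truthfulness of $\calM$ for that other coordinate rather than naively for agent $1$. The item-symmetry stage is comparatively routine once the conjugation identity is in hand, though one must keep the directions of the permutations consistent---acting on costs by $\sigma$ and on the allocation by $\sigma$---so that the re-indexing $\rho=\sigma\tau$ cleanly yields equivariance.
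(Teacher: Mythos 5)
Your proof is correct and takes essentially the same approach as the paper: symmetrize $\calM$ by averaging over the agent swap and over all item permutations, preserving truthfulness because each averaged term's incentive inequality appeals to truthfulness of $\calM$ for the appropriate (swapped or conjugated) coordinate, and preserving the efficiency ratio via the invariance of $\SW^c$ under agent swaps and item relabelings. The only cosmetic difference is that you perform the two averagings in separate stages, whereas the paper's mechanism $\calM'$ applies both in a single formula; composing your two stages reproduces exactly that mechanism.
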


The proof of \Cref{lem:anony-item-symme} is deferred to \Cref{sec:proof-div}.
By \Cref{lem:anony-item-symme}, it suffices to show that every anonymous, item-symmetric, smooth, and truthful mechanism for chores has an efficiency ratio of $0$ for chores.
Assume for contradiction that there exists an anonymous, item-symmetric, smooth, and truthful mechanism $\calM$ with an efficiency ratio of $\delta > 0$ for chores, and $\calM$ must be swap-dictatorial for chores by \Cref{thm:divi-char-chores}.
Let $D$ be the dictator's choice space, and let $m^{(1)} = 2^k$ and $m^{(i + 1)} = m^{(i)} / 2$ for $i \in [k - 1]$ with $k$ to be specified at the end of the proof.
Fix an arbitrary real number $t$ with $0 < t < \delta / (2 - \delta) \leq 1$, and we define a series of instances $\{\bfc^{(i)}\}_{i \in [k]}$.
In particular, agents' cost functions in instance $\bfc^{(i)}$ is defined as
\begin{align*}
    c_1^{(i)}(o) =
    \begin{cases}
        p, & o \in O^i_1,\\
        q, & o \in O^i_2,\\
        0, & \text{otherwise},
    \end{cases}
    \quad \text{and} \quad
    c_2^{(i)}(o) =
    \begin{cases}
        q, & o \in O^{(i)}_1,\\
        p, & o \in O^{(i)}_2,\\
        0, & \text{otherwise},
    \end{cases}
\end{align*}
where $O^{(i)}_1 = \{o_1, \ldots, o_{m^{(i)} / 2}\}$, $O^{(i)}_2 = \{o_{m^{(i)} / 2 + 1}, \ldots, o_{m^{(i)}}\}$, and $q / p = t$.
Suppose that when agent $1$ acts as the dictator on $\bfc^{(i)}$, he chooses the bundle $x^{(i)} \in D$, and let
\begin{align*}
    a^{(i)} = \frac{2}{m^{(i)}} \sum_{o \in O_1^{(i)}} x^{(i)}(o)
    \quad \text{and} \quad
    b^{(i)} = \frac{2}{m^{(i)}} \sum_{o \in O_2^{(i)}} x^{(i)}(o)
\end{align*}
denote the fractions of $O^{(i)}_1$ and $O^{(i)}_2$ contained in $x^{(i)}$, respectively.
We will show that as $i$ increases, $a^{(i)}$ decreases by a relative amount in order to maintain the efficiency ratio for chores.
However, $a^{(i)}$ cannot be smaller than $0$, which results in a contradiction.

It is known that a swap-dictatorial mechanism (for either goods or chores) that is anonymous and item-symmetric satisfies the following extra property.

\begin{lemma}[\cite{DBLP:conf/wine/HanSTZ11}]\label{lem:symmetric-SD}
    Assume that items are divisible and homogeneous.
    Let $D$ be the dictator's choice space of a swap-dictatorial mechanism (for either goods or chores) that is anonymous and item-symmetric.
    If a bundle $x = (x(o_1), \ldots, x(o_m)) \in D$, then $(x(\sigma(o_1)), \ldots, x(\sigma(o_m))) \in D$ for every $\sigma \in \Sym(O)$.
\end{lemma}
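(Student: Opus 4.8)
The plan is to show that item-symmetry of the mechanism forces its dictator's choice space $D$ to be closed under relabeling of the items. Throughout, for a permutation $\sigma \in \Sym(O)$ I use the relabeled cost function $c_i^\sigma$ as in the statement, and for a bundle $y \in [0,1]^m$ I write $y^\sigma := (y(\sigma(o_1)), \ldots, y(\sigma(o_m)))$ for the correspondingly relabeled bundle. The first step is to recast the two definitions as a single equivariance statement: for a swap-dictatorial mechanism, item-symmetry is equivalent to the mechanism commuting with relabelings, i.e.\ $\calM_i(c_1^\sigma, c_2^\sigma) = (\calM_i(c_1, c_2))^{\sigma^{-1}}$ for all $i$ and all $\sigma \in \Sym(O)$ (the inverse is merely the bookkeeping arising from the opposite conventions used for permuting costs versus bundles). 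The definition only stipulates this for transpositions, so I would verify the transposition case directly from the definition and then extend to arbitrary $\sigma$ using that transpositions generate $\Sym(O)$ and that relabeling composes functorially. Since $\sigma$ and $\sigma^{-1}$ both range over all of $\Sym(O)$, closure of $D$ under every $\sigma$ is exactly what this will eventually give.

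With equivariance in hand, I would fix an arbitrary $x \in D$ and a permutation $\sigma$ and try to exhibit a cost profile on which the relabeled instance exposes $x^{\sigma}$ as a \emph{selected} bundle, hence placing it in $D$. Concretely, pick $c_1$ with $x \in \argmin_{y \in D} c_1(y)$ (a suitable additive cost pointing at $x$), and let agent $2$ report a \emph{$\sigma$-invariant} cost function $c_2$, e.g.\ a constant one, so that $c_2^{\sigma} = c_2$. Because agent $2$ then faces literally the same optimization in the original profile $(c_1, c_2)$ and in the relabeled profile $(c_1^{\sigma}, c_2)$, it selects the same bundle $x_2 \in D$ in both. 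Writing the swap-dictatorial rule $\calM_1 = (x_1 + \mathbf{1}_O - x_2)/2$ on both profiles and invoking $\calM_1(c_1^{\sigma}, c_2) = (\calM_1(c_1, c_2))^{\sigma^{-1}}$, the $\mathbf{1}_O/2$ and $x_2$ terms cancel and I obtain a relation of the form $x' = x^{\sigma^{-1}} + (x_2 - x_2^{\sigma^{-1}})$, where $x' \in D$ is agent $1$'s selected bundle on the relabeled profile. Thus $x^{\sigma^{-1}}$ lands in $D$ \emph{up to} the fixed vector $s := x_2 - x_2^{\sigma^{-1}}$. I would carry out this step first for costs with a unique minimizer over $D$ (a generic, exposed direction) and then extend to all of $D$ by continuity, using smoothness of $\calM$, thereby sidestepping tie-breaking.

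The crux, and the step I expect to be the main obstacle, is that the swap-dictatorial allocation reveals only the \emph{difference} $x_1 - x_2$ of the two dictators' chosen bundles, so that $D$ is determined only up to a global translation and the residual shift $s$ must be eliminated before one gets literal membership $x^{\sigma} \in D$ (indeed, translating $D$ leaves the mechanism unchanged, so the claim can only hold for the canonical translate of $D$). I would handle this by exploiting that $s$ is common to all $x \in D$: the induced map $y \mapsto y^{\sigma^{-1}} + s$ is an affine isometry sending $D$ into $D$, and iterating it $\mathrm{ord}(\sigma)$ times produces a pure translation of the bounded set $D$ into itself, which forces the accumulated (orbit-averaged) shift to vanish; combining this with the invariance of the total mass $\sum_{o \in O} y(o)$ under relabeling then drives $s$ to $0$, giving $x^{\sigma} \in D$. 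Equivalently, after reducing to an anonymous and item-symmetric mechanism via \Cref{lem:anony-item-symme}, one may simply fix the canonical (symmetrized) representative of $D$ and run the isolation argument there, at which point the shift is zero by construction.
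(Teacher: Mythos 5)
First, note that the paper itself contains no proof of this lemma: it is imported verbatim from \cite{DBLP:conf/wine/HanSTZ11}, so your attempt has to be judged on its own merits. You correctly identified the central obstruction, namely that the swap-dictatorial representation only reveals the difference $x_1 - x_2$ of the two dictators' selections, so $D$ is far from uniquely determined by the mechanism (in particular $D$ and $D + t$ induce the same mechanism, since costs are additive). But your proposed repair in the third paragraph is not just unproven -- it is refuted by a one-point example. Take $D = \{(1/2, 0, \ldots, 0)\}$, a singleton. The induced mechanism is the constant equal-division rule $\calM_1(\bfc) \equiv \mathbf{1}_O / 2$, which is anonymous, item-symmetric, smooth, and truthful, and is swap-dictatorial with this $D$. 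For $\sigma$ the transposition of $o_1$ and $o_2$, your derived relation holds with shift $s = x - x^{\sigma^{-1}} = (1/2, -1/2, 0, \ldots, 0) \neq 0$; the map $T(y) = y^{\sigma^{-1}} + s$ maps $D$ into $D$ (it fixes the single point); iterating $\mathrm{ord}(\sigma) = 2$ times gives a translation by $s^{\sigma^{-1}} + s = 0$; and $\sum_{o \in O} s(o) = 0$, so the total-mass invariance is also satisfied. Every constraint your argument extracts is met, yet $s \neq 0$ and $x^{\sigma} = (0, 1/2, 0, \ldots, 0) \notin D$. So the step ``drives $s$ to $0$'' cannot work: the lemma is literally \emph{false} for an arbitrary representative $D$, and its actual content is that \emph{some} permutation-closed representative exists (the canonical one arising in the construction of \cite{DBLP:conf/wine/HanSTZ11}, which is how the paper uses it in the proof of \Cref{thm:pot-homo-chores}). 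Your closing sentence -- ``fix the canonical (symmetrized) representative of $D$ \ldots\ at which point the shift is zero by construction'' -- presupposes exactly the existence that needs to be proved, so the argument is circular at the decisive point.

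There are two secondary gaps worth flagging. First, the cancellation step requires agent $2$ to select the \emph{same} bundle $x_2$ on the profiles $(c_1, c_2)$ and $(c_1^{\sigma}, c_2)$, but with a constant (hence $\sigma$-invariant) $c_2$ every element of $D$ is a minimizer, and the swap-dictatorial definition permits the tie-broken selection to depend on the entire profile; consequently neither $x_2' = x_2$ nor your claim that $s$ is ``common to all $x \in D$'' is justified. Nor can you escape by choosing a $\sigma$-invariant $c_2$ with a unique minimizer: any $\sigma$-invariant additive cost assigns equal cost to $y$ and $y^{\sigma}$, so ties are structurally unavoidable whenever $D$ is not already pointwise $\sigma$-fixed. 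Second, your continuity patch invokes smoothness of $\calM$, which is a hypothesis of \Cref{thm:divi-char-chores} but not of this lemma, which is stated for arbitrary anonymous, item-symmetric swap-dictatorial mechanisms. A correct route would instead use your (sound) equivariance step to show that $D^{\sigma} = \{y^{\sigma} \mid y \in D\}$ also represents $\calM$ for every $\sigma \in \Sym(O)$, and then confront the genuine combinatorial task of producing a \emph{single} representative closed under all permutations -- which is where the real work of \cite{DBLP:conf/wine/HanSTZ11} lies.
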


By \Cref{lem:symmetric-SD} and the anonymity and item-symmetry of $\calM$, it follows that agent $2$ takes $a^{(i)}$ fraction of $O_2^{(i)}$ and $b^{(i)}$ fraction of $O_1^{(i)}$ when he becomes the dictator on $\bfc^{(i)}$.
Since the efficiency ratio of $\calM$ for chores is $\delta$, the social welfare achieved by $\calM$ on $\bfc^{(i)}$ satisfies
\begin{align*}
    \delta \cdot \left(p \cdot a^{(i)} + q \cdot b^{(i)} + q \cdot (1 - a^{(i)}) + p \cdot (1 - b^{(i)}) \right) \cdot \frac{m^{(i)}}{2} \cdot \frac{1}{2} \cdot 2
    \leq \SW^c(c_1^{(i)}, c_2^{(i)})
    = q \cdot m^{(i)}.
\end{align*}
Rearranging the inequalities, we obtain
\begin{align}\label{eqn:pot}
    a^{(i)} - b^{(i)} \leq \frac{2t - \delta(1 + t)}{\delta(1 - t)}.
\end{align}

On the other hand, for $\bfc^{(i)}$, since agent $1$ chooses $x^{(i)}$ from $D$ when he becomes the dictator, the resulting cost for agent $1$ must be no larger than that obtained from choosing $x^{(i - 1)}$, as has been chosen for $\bfc^{(i - 1)}$.
By \Cref{lem:symmetric-SD} and the anonymity and item-symmetry of $\calM$, there exists a permutation of $x^{(i - 1)}$ in $D$ such that the average of the first $m^{(i)} / 2$ components is no more than the average of the second $m^{(i)} / 2$ components, and denote such an allocation by $x$.
By comparing agent $1$'s cost on $\bfc^{(i)}$ between choosing $x^{(i)}$ and $x$, it holds that
\begin{align*}
    (p \cdot a^{(i)} + q \cdot b^{(i)}) \cdot \frac{m^{(i)}}{2} \leq (p + q) \cdot a^{(i - 1)} \cdot \frac{m^{(i)}}{2}.
\end{align*}
By rearranging the terms, it follows that
\begin{align}\label{eqn:dictate}
    a^{(i)} \cdot \frac{1}{t + 1} + b^{(i)} \cdot \frac{t}{t + 1} \leq a^{(i - 1)}.
\end{align}

Finally, combining \eqref{eqn:pot} and \eqref{eqn:dictate}, we obtain
\begin{align*}
    a^{(i)}
    \leq a^{(i - 1)} + \frac{t}{t + 1} \cdot \frac{2t - \delta(1 + t)}{\delta(1 - t)}.
\end{align*}
Noticing that $a^{(1)} \leq 1$ leads to
\begin{align*}
    a^{(k)}
    \leq 1 + (k - 1) \cdot \frac{t}{t + 1} \cdot \frac{2t - \delta(1 + t)}{\delta(1 - t)}.
\end{align*}
By the assumption that $0 < t < \delta / (2 - \delta) \leq 1$, we know that $2t - \delta(1 + t) < 0$.
Therefore, by choosing a sufficiently large $k$, we derive that $a^{(k)} < 0$, contradicting the fact that $a^{(k)} \in [0, 1]$.
\end{proof}
\section{Indivisible Items with Bi-Valued Preferences}
\label{sec:bobw}

In this section, we assume that there are $n \geq 2$ agents, and we give an ex-ante truthful, ex-ante PO, ex-ante EF, and ex-post EF1 mechanism for indivisible chores with \emph{bi-valued} cost functions.
In particular, we say that the cost functions are bi-valued if there exists $p > q > 0$\footnote{When $q = 0$, the case of bi-valued cost functions are equivalent to that of binary cost functions, which has been thoroughly studied by prior work, e.g., \cite{DBLP:journals/eor/SunC25}.} such that $c_i(o) \in \{p, q\}$ for all $i \in N$ and $o \in O$.
Recall that \cite{DBLP:journals/eor/SunC25} give a mechanism for \emph{restricted additive} cost functions\footnote{We say that the cost functions are \emph{restricted additive} if each item $o$ is associated with an intrinsic value $e_o$ such that $c_i(o) \in \{0, e_o\}$ for every agent $i$.} satisfying numerous ex-ante and ex-post properties for chores, which is, to the best of our knowledge, the only known result of this kind for chores.
Since the set of bi-valued instances and the set of restricted additive instances are not a subset of each other, our result is incomparable with that of \cite{DBLP:journals/eor/SunC25}.

\begin{theorem}\label{thm:bivalue-chores}
    Suppose that agents have bi-valued cost functions and items are indivisible.
    Then, there exists an ex-ante truthful, ex-ante PO, ex-ante EF, and ex-post EF1 mechanism for chores.
\end{theorem}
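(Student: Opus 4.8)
The plan is to combine a \emph{best-of-both-worlds} pipeline with the goods-to-chores reduction of \Cref{thm:div-goods-chores}. First I would construct the mechanism $\calM$ in stages. Given a bi-valued cost profile $\bfc$ with costs in $\{p, q\}$, reinterpret it as a bi-valued \emph{utility} profile for homogeneous divisible goods and feed it to the bi-valued divisible-goods mechanism $\calM^g$ of \cite{DBLP:journals/corr/abs-2407-13634}, obtaining a fractional goods allocation $\bfx^g = \calM^g(\bfc)$ that is truthful, PO, and EF for goods. Then apply the reduction to obtain the fractional chore allocation $\bfx^c$ with $x_i^c(o) = (1 - x_i^g(o)) / (n - 1)$, and finally invoke the implementation scheme of \cite{DBLP:conf/wine/Aziz20} to decompose $\bfx^c$ into a lottery over integral chore allocations, outputting this lottery. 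Since any such decomposition preserves the marginals, the ex-ante allocation of $\calM$ is exactly $\bfx^c$, so every ex-ante claim reduces to a property of $\bfx^c$. In particular, ex-ante truthfulness and ex-ante EF are immediate: the ex-ante allocation is a deterministic function of $\bfx^g$ via the reduction, so \Cref{thm:div-goods-chores}(1) and (2) transfer truthfulness and EF to $\bfx^c$ for chores (and part (3) additionally yields PROP).

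The delicate part is ex-ante PO, because the reduction provably fails to preserve PO for $n \ge 3$: the inverse map $x_i^g = 1 - (n - 1) x_i^c$ only reaches chore allocations with $x_i^c(o) \le 1 / (n - 1)$, so a Pareto-improving chore allocation that concentrates a single chore on one agent has no goods preimage and cannot be refuted by the PO of $\bfx^g$. Hence I would argue PO directly, exploiting the \emph{balance} property of $\bfx^g$. Concretely, I would invoke the standard fractional-PO characterization for bi-valued chores---no cost-reducing transfer along any cycle in the bipartite consumption graph, equivalently supportability by minimum-pain-per-buck prices---and show that balance forces every fractionally split item to be shared only among agents whose ratios coincide, ruling out improving cycles.

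For ex-post EF1, I would analyze the integral allocations in the support of the decomposition from \cite{DBLP:conf/wine/Aziz20}. The scheme rounds $\bfx^c$ to integral allocations differing from it by at most one item per agent within each value class (this is exactly where the balance of $\bfx^c$, preserved by the scheme, is used). Combining this bounded rounding gap with the ex-ante EF inequality $c_i(x_i^c) \le c_i(x_j^c)$, I would show that for every ordered pair $i, j$ one can remove an appropriately chosen chore $o \in A_i$ so that $c_i(A_i - o) \le c_i(A_j)$; the bi-valued structure ensures the removed item's cost exactly absorbs the rounding discrepancy, establishing EF1 for chores.

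I expect the ex-ante PO argument to be the main obstacle. Because the reduction destroys PO for $n \ge 3$, the proof cannot be black-box and must open up the balanced structure of the mechanism of \cite{DBLP:journals/corr/abs-2407-13634} and match it against a chore-specific PO characterization; the ex-post EF1 analysis is secondary but still requires carefully tracking how balance controls the rounding in the harder chores regime.
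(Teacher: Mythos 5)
Your pipeline has a fatal flaw at ex-ante PO, and it is not a matter of finding a cleverer argument: the fractional allocation $\bfx^c$ produced by composing $\calM^g$ with the reduction $x_i^c(o) = (1 - x_i^g(o))/(n-1)$ is genuinely \emph{not} PO for chores when $n \geq 3$, so the ``direct'' market-equilibrium argument you sketch cannot succeed. Concretely, take $n = m = 3$ with $c_i(o_i) = q$ and $c_i(o_j) = p$ for $j \neq i$. Feeding $\bfc$ as utilities to $\calM^g$, each agent's high-utility items are the two he hates, and by symmetry $x_i^g$ splits those two items, e.g.\ $x_1^g = (0, 1/2, 1/2)$; the reduction then gives $x_1^c = (1/2, 1/4, 1/4)$, so every agent incurs cost $(p+q)/2$. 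The integral allocation assigning $o_i$ to agent $i$ gives every agent cost $q < (p+q)/2$, a strict Pareto improvement. The structural reason is the one you yourself identified: the reduction caps $x_i^c(o) \leq 1/(n-1)$, forcing every item to be smeared across agents including those with cost $p$ for it, which is exactly what PO forbids here. Your claim that ``balance forces every fractionally split item to be shared only among agents whose ratios coincide'' is false for $\bfx^c$, since under the reduction essentially every agent consumes a positive fraction of essentially every item.

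There is a second, independent gap at ex-post EF1: the implementation scheme of \cite{DBLP:conf/wine/Aziz20} applies to outcomes of the PS rule, and $\bfx^c$ is in general not a \PSc~outcome (in the example above the \PSc~outcome is the diagonal integral allocation, not $\bfx^c$), so the decomposition with the ordered structural guarantees you invoke is unavailable; a generic Birkhoff-type decomposition preserves marginals but not the per-rank dominance structure your EF1 argument needs. The paper's proof avoids the \Cref{thm:div-goods-chores} reduction entirely in this setting, precisely because it destroys PO. Instead it sets $v_i(o) = (p+q) - c_i(o)$ and outputs $\calM^g(\bfv)$ \emph{verbatim} (no complementation): since the mechanism of \cite{DBLP:journals/corr/abs-2407-13634} guarantees $|x_i| = m/n$ for all $i$, the affine identity $c_i(x) = (p+q)\cdot|x| - v_i(x)$ transfers truthfulness and EF directly; PO is proved by explicitly constructing a market equilibrium with prices in $\{p, q\}$ using the structure of the Nash-welfare-maximizing allocation (\Cref{lem:char-mnw,lem:bobw-po}); and because the output is a \PSc~outcome, a strengthened version of the Aziz decomposition (\Cref{lem:decomp-ps-plus}) yields ex-post EF1. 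Your instinct that balance is the load-bearing property was right, but it must be exploited through direct identification of the goods and chores allocations, not through the complement-and-average reduction.
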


As a remark, it is known that the integral allocations returned by an ex-ante PO mechanism satisfy ex-post \emph{fractional PO}, which is a stronger notion of ex-post PO \cite{aziz2023best}.
Moreover, we skip the discussion on time complexity throughout as it is not our main concern, yet our mechanism can be implemented in polynomial time.

\begin{proof}[Proof of \Cref{thm:bivalue-chores}]
Recall that the \emph{probabilistic serial (PS)} rule is a mechanism for divisible items in which all agents simultaneously consume their favorite items among those that have not been fully allocated at a unit speed, assuming that all items have the same ``size''.
Notice that the final outcome depends on the tie-breakings when an agent exhibits identical preferences on multiple items, and we allow an agent to switch to consuming another item before the current consumed item becomes unavailable.
We refer to \cite{DBLP:conf/wine/AzizY14,DBLP:conf/atal/0037SX24} for a more formal description of the PS rule.
Since the definition of favorite items relies on the nature of items, i.e., whether they are goods or chores, we use \PSg~to denote the PS rule where agents prefer items with higher utilities, and \PSc~to denote the PS rule where agents prefer items with lower costs.

We first present a truthful, PO, and EF mechanism for chores in the following lemma, assuming that cost functions are bi-valued and items are divisible, whose proof is deferred to \Cref{sec:proof-divi-chores}.
For each fractional bundle $x$, we use $|x| = \sum_{o \in O} x(o)$ to denote the size of $x$.

\begin{lemma}\label{lem:divi-truth-po-chores}
    When cost functions are bi-valued and items are divisible, there exists a truthful, PO, and EF mechanism $\calM$ for chores such that every fractional allocation $\bfx$ outputted by $\calM$ is an outcome of \PSc and satisfies $|x_i| = m / n$ for every $i \in N$.
\end{lemma}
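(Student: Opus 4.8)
The plan is to obtain $\calM$ from a known truthful mechanism for \emph{goods} via a value-reversing reduction tailored to the bi-valued structure, and then to read off the remaining properties from the \PSc~process itself. Given a bi-valued cost profile $\bfc$ with $c_i(o)\in\{p,q\}$, define the utility profile $\bfv$ by $v_i(o)=p+q-c_i(o)$, so that each agent's cheap chores ($q$) become high-value goods ($p$) and vice versa, and set $\calM(\bfc):=\calM^g(\bfv)$, where $\calM^g$ is the truthful mechanism for homogeneous divisible bi-valued goods of \cite{DBLP:journals/corr/abs-2407-13634}, whose outputs are balanced \PSg~outcomes. I emphasize that this is \emph{not} the complement reduction of \Cref{thm:div-goods-chores}: that reduction preserves neither \PSc~outcomes nor Pareto optimality once $n\ge 3$ (one checks that, e.g., an outcome giving each agent its own unique cheap item cannot arise from $x_i^c(o)=(1-x_i^g(o))/(n-1)$), whereas value reversal keeps the eating order intact.

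The first step is to note that the reduction exactly identifies the two eating processes. Since $v_i(o)=p+q-c_i(o)$ is order-reversing, ``prefer higher $v_i$'' coincides with ``prefer lower $c_i$'', so any run of \PSg on $\bfv$ is verbatim a run of \PSc on $\bfc$; hence $\calM(\bfc)$ is an outcome of \PSc. Because in \PSc all $n$ agents eat simultaneously at unit speed until the total mass $m$ is exhausted, every agent consumes exactly $m/n$, giving $|x_i|=m/n$ for free.

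The second step transfers truthfulness and envy-freeness through the reduction, crucially using balance. For any (bi-valued) report, $\calM^g$ returns a balanced allocation, so $|x_i|=m/n$ no matter what agent $i$ declares; consequently $v_i(x_i)=(p+q)\tfrac{m}{n}-c_i(x_i)$, and after cancelling the constant $(p+q)\tfrac{m}{n}$ the goods inequalities become the chores inequalities. Thus truthfulness $v_i(\calM^g_i(\bfv))\ge v_i(\calM^g_i(v_i',v_{-i}))$ turns into $c_i(\calM_i(\bfc))\le c_i(\calM_i(c_i',c_{-i}))$, and envy-freeness $v_i(x_i)\ge v_i(x_j)$ turns into $c_i(x_i)\le c_i(x_j)$. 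Here the admissible reports are bi-valued, so the mapped report $v_i'$ is a legal input to $\calM^g$.

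The main obstacle is Pareto optimality, which, unlike truthfulness and EF, cannot be pulled back through the reduction: a chores Pareto improvement may shrink some agent's total mass and so fail to be a goods improvement (formally, if $c_j(x_j')\le c_j(x_j)$ but $v_j(x_j')<v_j(x_j)$ then $|x_j'|<|x_j|$, breaking goods-domination). I would instead derive PO from the \PSc~structure directly: \PSc~outputs an ordinally (stochastic-dominance) efficient allocation, and for bi-valued chores ordinal efficiency coincides with Pareto optimality. The heart of the argument is that any Pareto-improving redistribution decomposes into a cycle of fractional trades in which each participating agent swaps a held item for a strictly cheaper one; with only two cost values every such strict-preference swap strictly lowers cost, so the cycle is precisely a cycle in the stochastic-dominance graph of the \PSc~outcome, contradicting its ordinal efficiency. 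Verifying this equivalence carefully, including that \PSc~is ordinally efficient in the homogeneous divisible model and that the trading cycle can be extracted and executed feasibly, is the technical core I would defer to \Cref{sec:proof-divi-chores}.
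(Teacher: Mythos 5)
Your reduction and your treatment of truthfulness and EF coincide with the paper's proof: the paper also sets $v_i(o) = (p+q) - c_i(o)$, invokes the mechanism $\calM^g$ of \cite{DBLP:journals/corr/abs-2407-13634}, observes that order reversal makes every \PSg~run on $\bfv$ a \PSc~run on $\bfc$, and cancels the constant $(p+q)\cdot m/n$ exactly as in your identity (this is \eqref{eqn:balance-value} with $|x| = L = m/n$). Your observation that the complement reduction of \Cref{thm:div-goods-chores} is unusable here, and that PO cannot be pulled back through the value-reversal because chores Pareto improvements need not preserve per-agent mass, is also correct and is why the paper proves PO separately.

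However, your substitute argument for PO has a genuine gap. You derive PO from the claim that \PSc~outcomes are ordinally efficient and that this coincides with PO for bi-valued chores; but bi-valued instances are saturated with indifferences, and under the paper's definition of \PSc~(arbitrary tie-breaking, with agents allowed to switch items before exhaustion) a \PSc~outcome need not even be Pareto optimal. Take $n = 2$, $m = 2$, $c_1(o_1) = c_1(o_2) = q$, $c_2(o_1) = q$, $c_2(o_2) = p$: a legal run has both agents eat $o_1$ on $[0, 1/2]$ and then $o_2$, yielding the balanced outcome $x_1 = x_2 = (1/2, 1/2)$ with $|x_i| = m/n$, which is Pareto dominated by $x_1' = (0,1)$, $x_2' = (1,0)$ (agent $1$ is indifferent, agent $2$'s cost drops from $(p+q)/2$ to $q$). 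This also breaks your cycle decomposition: in the dominating trade, agent $1$ swaps for an item that is only \emph{weakly} cheaper, so Pareto improvements do not decompose into strictly-improving swaps, and no contradiction with sd-efficiency can be extracted. The upshot is that ``outcome of \PSc'' is too weak a hypothesis; PO must be established from the specific construction of $\calM^g$. The paper does this by proving a structural property of the Nash-welfare-maximizing partial allocation (\Cref{lem:char-mnw}) and using it to build a chores Fisher-market equilibrium with item prices in $\{p, q\}$ and budgets equal to bundle prices, so that PO follows from the first welfare theorem (\Cref{lem:bobw-po}); your deferred ``technical core'' is false as stated and cannot be repaired without importing this (or comparable) mechanism-specific structure.
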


Next, we show that every fractional allocation outputted by \PSc~can be \emph{implemented} over a set of integral allocations (see the formal description in \Cref{lem:decomp-ps-plus}) satisfying certain structural properties.
We achieve this by applying the machinery of \cite{DBLP:conf/wine/Aziz20} in a black-box manner and strengthen the guarantees of the resulting integral allocations originally proved by \cite{DBLP:conf/wine/Aziz20}, which is crucial for us to establish the ex-post EF1 property for chores.

\begin{lemma}\label{lem:decomp-ps-plus}
    Suppose that agents have additive cost functions.
    For every fractional allocation $\bfx$ corresponding to an outcome of \PSc, we can find a random integral allocation $\widehat{A}$ such that $\Pr[o \in \widehat{A}_i] = x_{i}(o)$ for all $i \in N$ and $o \in O$.
    Moreover, for each integral allocation $A$ in the support of $\widehat{A}$, it satisfies that $|A_i| \leq |A_j| + 1$ for all $i, j \in N$, and we can label the items in each bundle $A_i$ as $o_i^1, o_i^2, \ldots, o_i^{|A_i|}$ such that
    \begin{enumerate}
        \item \label{item:con1} $c_i(o_i^k) \leq c_i(o_i^{k + 1})$ for all $i \in N$ and $k \in [|A_i| - 1]$,
        \item \label{item:con2} for all $i, j \in N$ with $|A_i| \leq |A_j|$, for every $k \in [|A_j| - 1]$, $c_i(o_i^k) \leq c_i(o_j^{k + 1})$, and
        \item \label{item:con3} for all $i, j \in N$ with $|A_i| > |A_j|$, for every $k \in [|A_j|]$, $c_i(o_i^k) \leq c_i(o_j^k)$.
    \end{enumerate}
\end{lemma}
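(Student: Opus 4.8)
The plan is to derive \Cref{lem:decomp-ps-plus} from two ingredients pulling in opposite directions: the order structure of the \PSc{} process, which controls costs, and Aziz's balanced implementation, which controls cardinalities. First I would record the structural features of \PSc{} that do all the cost bookkeeping. Since there are $m$ unit-size chores consumed at total rate $n$, every chore is gone exactly at time $T = m/n$, and no agent can be idle before $T$ (else some chore would survive past $T$); in particular each agent eats continuously and $|x_i| = T$ for all $i$, which is what forces the balance $|A_i| \le |A_j| + 1$ once the fractional point is rounded to an integral one. The two facts I use are: (i) at every instant each agent consumes a \emph{cheapest available} chore, and the set of available chores only shrinks in time; and (ii) \emph{availability-monotonicity}: if a chore $o'$ is still available at time $\tau$, then the chore agent $i$ is consuming at $\tau$ has $c_i$-cost at most $c_i(o')$, which is immediate from (i). The single-agent case of (ii) already shows that the cost of the chore that $i$ consumes is non-decreasing in time.

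Next I would invoke the machinery of \cite{DBLP:conf/wine/Aziz20} in a black-box way. Viewing $\bfx$ as a point of the transportation polytope with all column sums $1$ and all row sums $T$, Aziz's implementation yields a random integral $\widehat{A}$ with $\Pr[o \in \widehat{A}_i] = x_i(o)$ and, for every realized $A$, $|A_i| \in \{\lfloor T \rfloor, \lceil T \rceil\}$. The strengthening I need on top of this is a \emph{synchronized round structure}: for each realized $A$ there exist common reference times $\tau_1 \le \cdots \le \tau_R$ (with $R = \lceil T \rceil$) and a labeling $o_i^1, \ldots, o_i^{|A_i|}$ of each bundle such that, in the actual \PSc{} run, agent $i$ is consuming $o_i^k$ precisely at time $\tau_k$ for every $k \le |A_i|$ (short bundles simply hold no item at the final reference times). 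Because the $\tau_k$ increase and the eaten cost is non-decreasing, this labeling is automatically $c_i$-sorted, which is condition~\ref{item:con1}; the balance moreover guarantees that whenever condition~\ref{item:con2} or~\ref{item:con3} ranges over an index $k$, the item $o_i^k$ is indeed defined.

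Given the synchronized reference times, conditions~\ref{item:con2} and~\ref{item:con3} drop out of availability-monotonicity applied at a \emph{single common instant}, which is exactly why sharing the $\tau_k$ across agents is the whole point. For condition~\ref{item:con3}, with $|A_i| > |A_j|$ and $k \le |A_j|$, agent $j$ is eating $o_j^k$ at $\tau_k$, so $o_j^k$ is available at $\tau_k$ while agent $i$ is eating its cheapest available chore $o_i^k$ there; hence $c_i(o_i^k) \le c_i(o_j^k)$. For condition~\ref{item:con2}, with $|A_i| \le |A_j|$ and $k \le |A_j| - 1$, agent $j$ consumes $o_j^{k+1}$ at the later time $\tau_{k+1} \ge \tau_k$, so $o_j^{k+1}$ is still available at $\tau_k$, and the same application of (ii) gives $c_i(o_i^k) \le c_i(o_j^{k+1})$. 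These are precisely the inequalities that, together with the balance, telescope against $c_i(A_j)$ after removing $i$'s most expensive chore, which is what will later yield ex-post EF1 in \Cref{thm:bivalue-chores}.

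The main obstacle is establishing the synchronized round structure itself, i.e.\ upgrading Aziz's guarantee from ``correct marginals and balanced cardinalities'' to ``all still-active agents consume their $k$-th assigned chore at a common time $\tau_k$.'' The naive rounding that fixes a uniform offset $u$ and assigns to $A_i$ whatever agent $i$ eats at times $u, u+1, \ldots$ realizes this structure perfectly but is infeasible: two agents may be eating the \emph{same} chore at a reference instant, yet that chore can be given to only one of them integrally. Resolving these collisions while preserving the marginals, the cardinality balance, and---crucially---the property that each $o_i^k$ is still consumed by $i$ at (or, in the weaker version one can afford, no later than) the shared reference time $\tau_k$ is exactly where Aziz's decomposition is invoked and strengthened, and is the step demanding the careful bookkeeping.
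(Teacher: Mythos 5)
There is a genuine gap at exactly the step you flag as ``the main obstacle.'' Your derivation of conditions~\ref{item:con1}--\ref{item:con3} is correct \emph{conditional on} the synchronized round structure (common reference times $\tau_k$ at which every still-active agent is consuming its $k$-th assigned chore), but you never establish that structure: you observe that the naive offset-based rounding creates collisions, and then state that resolving them while preserving marginals, balance, and the timing property ``is exactly where Aziz's decomposition is invoked and strengthened.'' That strengthening is precisely the content of \Cref{lem:decomp-ps-plus} beyond the black-box guarantee of \Cref{lem:decomp-ps}, so the proposal is circular at its core: the black box gives you only the weaker cross-inequality $v_i(o_i^k) \geq v_i(o_j^{k+1})$ for $k \in [\min\{|A_i|, |A_j|-1\}]$, which does not yield condition~\ref{item:con3} (note that for $|A_i| > |A_j|$ the needed inequality $c_i(o_i^k) \leq c_i(o_j^k)$ compares \emph{equal} indices, which Aziz's statement never does), and nothing in your writeup closes that distance.

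The paper's proof avoids reopening Aziz's construction entirely, via two moves you do not have. First, it flips chores to goods by the affine map $v_i(o) = c_{\max} - c_i(o)$, so that an outcome of \PSc{} is verbatim an outcome of \PSg{} and \Cref{lem:decomp-ps} applies as stated. Second---and this is the key missing idea---when $m = kn + r$ with $r \neq 0$, it pads the instance with $n - r$ dummy items of utility $2c_{\max}$, strictly preferred by everyone, allocated uniformly; the padded allocation is still a \PSg{} outcome, and now every realized bundle has exactly $k+1$ items. A short pigeonhole argument using Aziz's cross-inequality shows each bundle contains at most one dummy, and since a dummy is necessarily the \emph{first} labeled item of its bundle, deleting the dummies shifts the short bundles' labels down by one. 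This index shift converts Aziz's inequality $v_i(o_i^k) \geq v_i(o_j^{k+1})$ into exactly the equal-index condition~\ref{item:con3} for pairs with $|A_i| > |A_j|$, while conditions~\ref{item:con1} and~\ref{item:con2} are inherited directly. In short: where you propose to strengthen the decomposition internally (and leave that unproven), the paper reduces to the balanced case $n \mid m$ where no strengthening is needed, and obtains the extra condition purely combinatorially from the relabeling. If you want to salvage your route, you would have to reprove Aziz's implementation from scratch with the synchronized timing property---doable but a substantially longer argument than the padding trick.
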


We defer the proof of \Cref{lem:decomp-ps-plus} to \Cref{sec:proof-decom-ps}.
Now, we are ready to finish the proof of \Cref{thm:bivalue-chores}.
    Let $\calM$ be the mechanism as per \Cref{lem:divi-truth-po-chores}, and we describe our mechanism $\calM'$ as follows.
    Recall that $\calM$ outputs fractional allocations and $\calM'$ outputs random integral allocations.
    Given as input a cost profile $\bfc$, we apply \Cref{lem:decomp-ps-plus} with $\bfx = \calM(\bfc)$ to obtain a random integral allocation $\widehat{A}$, which will be the output of $\calM'$.
    The ex-ante properties of $\calM'$ directly follow from the properties of $\calM$ promised by \Cref{lem:divi-truth-po-chores}, and it remains to prove that $\calM'$ satisfies ex-post EF1 for chores.
    
    Let $A$ be an integral allocation in the support of $\calM'(\bfc)$, which is obtained by applying \Cref{lem:decomp-ps-plus} with $\bfx = \calM(\bfc)$.
    By the guarantee of \Cref{lem:decomp-ps-plus}, $|A_i| \leq |A_j| + 1$ for all $i, j \in N$, and we label the items in each bundle $A_i$ by $o_i^1, o_i^2, \ldots, o_i^{|A_i|}$ as specified by \Cref{lem:decomp-ps-plus}.
    Next, we show that $A$ is EF1 for chores by analyzing the envy of agent $i$ toward agent $j$ depending on whether $|A_i| \leq |A_j|$.

    \paragraph{Case 1: $|A_i| \leq |A_j|$.}
    By \Cref{lem:decomp-ps-plus}, for every $k \in [|A_j| - 1]$, $c_i(o_i^k) \leq c_i(o_j^{k + 1})$.
    As a result,
    \begin{align*}
        c_i(A_i)
        &= \sum_{k=1}^{|A_i|} c_i(o_i^k)
        \leq c_i(o_i^{|A_i|}) + \sum_{k=1}^{|A_i| - 1} c_i(o_j^{k + 1})\\
        &\leq c_i(o_i^{|A_i|}) + c_i(A_j),
    \end{align*}
    which implies that agent $i$ does not envy agent $j$ up to one chore.

    \paragraph{Case 2: $|A_i| > |A_j|$.}
    By \Cref{lem:decomp-ps-plus}, for every $k \in [|A_j|]$, $c_i(o_i^k) \leq c_i(o_j^k)$.
    As a result,
    \begin{align*}
        c_i(A_i)
        &= \sum_{k=1}^{|A_i|} c_i(o_i^k)
        \leq c_i(o_i^{|A_i|}) + \sum_{k=1}^{|A_j|} c_i(o_j^k)\\
        &= c_i(o_i^{|A_i|}) + c_i(A_j),
    \end{align*}
    which implies that agent $i$ does not envy agent $j$ up to one chore.
\end{proof}

\subsection{Proof of \Cref{lem:divi-truth-po-chores}}
\label{sec:proof-divi-chores}

We first recall the truthful, PO, and EF mechanism of \cite{DBLP:journals/corr/abs-2407-13634} for divisible goods with bi-valued utility functions, denoted as $\calM^g$.
Given a bi-valued utility profile $\bfv$ as input, where $v_i(o) \in \{p, q\}$ for all $i \in N$ and $o \in O$ with $p > q > 0$, we assume that for each agent $i$, there exists an item $o$ such that $v_i(o) = p$, which is without loss of generality since if $v_i(o) = q$ for every item $o$, then we can equivalently treat the utility function of agent $i$ to satisfy $v_i(o) = p$ for every item $o$.
Let $\bfx'$ be an arbitrary (possibly partial) allocation that maximizes
\begin{align}\label{eqn:nash-welfare}
    f_{\bfv}(\bfx) = \prod_{i \in N} \sum_{o \in O: v_i(o) = p} x_{i}(o).
\end{align}
In particular, let $H_{\bfv} = \{o \in O \mid v_i(o) = q \text{ for every } i \in N\}$, and here we do not allocate the items in $H_{\bfv}$, i.e., $x_i'(o) = 0$ for all $i \in N$ and $o \in H_{\bfv}$.
Notice that $x'_i(o) > 0$ implies $v_i(o) = p$ for all $i \in N$ and $o \in O$ by the optimality of $\bfx'$, and $f_{\bfv}(\bfx') > 0$ since there exists at least one item $o$ for each agent $i$ such that $v_i(o) = p$.
Let $L = m / n$, and we prove a structural property of $\bfx'$ in the following lemma, which will be useful in the proof of \Cref{lem:bobw-po} to construct a market equilibrium.

\begin{lemma}\label{lem:char-mnw}
    For every $o \in O \setminus H_{\bfv}$, let $N_o = \{i \in N \mid x_{i}'(o) > 0\}$ be the set of agents receiving some fraction of $o$ in $\bfx'$.
    Then, either $|x_i'| > L$ for every $i \in N_o$, or $|x_i'| \leq L$ for every $i \in N_o$.
    Moreover, if $|x_i'| > L$ for some $i \in N_o$, then $v_j(o) = q$ for every $j \in N$ such that $|x_j'| \leq L$.
\end{lemma}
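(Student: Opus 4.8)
The plan is to exploit the fact that, at the optimum $\bfx'$, the objective $f_{\bfv}$ collapses to a product of \emph{bundle sizes}. Since the excerpt already records that $x_i'(o) > 0$ implies $v_i(o) = p$, every unit of mass agent $i$ receives is counted in the inner sum of \eqref{eqn:nash-welfare}, so $\sum_{o \in O : v_i(o) = p} x_i'(o) = |x_i'|$ and hence $f_{\bfv}(\bfx') = \prod_{i \in N} |x_i'|$. Combined with $f_{\bfv}(\bfx') > 0$, this yields $|x_i'| > 0$ for every $i$, so $\log f_{\bfv}(\bfx') = \sum_{i \in N} \log |x_i'|$ is well-defined. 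I would then establish both parts of the lemma by local exchange (first-order optimality) arguments: transferring an infinitesimal fraction $\epsilon$ of a single item $o$ between two agents affects only their two bundle sizes, each by exactly $\pm \epsilon$ whenever both value $o$ at $p$, so the sign of the induced change in $\log f_{\bfv}$ is governed by the difference of the reciprocals $1/|x_i'|$.

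For the first claim, fix $o \in O \setminus H_{\bfv}$ and any two agents $i, i' \in N_o$. Both receive a positive fraction of $o$ and hence value it at $p$. Moving $\epsilon$ of $o$ from $i$ to $i'$ is feasible for small $\epsilon$ of either sign, and changes the log-objective at first order by $-1/|x_i'| + 1/|x_{i'}'|$. Optimality of $\bfx'$ forces this derivative to vanish in both directions, so $|x_i'| = |x_{i'}'|$. Thus all agents in $N_o$ share a common bundle size, whence they are simultaneously all $> L$ or all $\le L$; this is in fact a strengthening of the first part.

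For the second claim, suppose $|x_i'| > L$ for some $i \in N_o$. By the first part, every agent in $N_o$ then has bundle size $> L$, so any $j$ with $|x_j'| \le L$ lies outside $N_o$, i.e.\ $x_j'(o) = 0$. Assume for contradiction that $v_j(o) = p$. Transferring $\epsilon > 0$ of $o$ from $i$ to $j$ is feasible (as $x_i'(o) > 0$) and, since both value $o$ at $p$, changes the log-objective at first order by $-1/|x_i'| + 1/|x_j'|$. Because $|x_j'| \le L < |x_i'|$ with both positive, this quantity is strictly positive, contradicting the optimality of $\bfx'$. Hence $v_j(o) = q$, which establishes the second part.

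The arguments are short, so the points requiring care are bookkeeping rather than conceptual, and this is where I expect the only friction to lie: verifying that each perturbation stays within the feasible polytope (nonnegativity, and the per-item capacity $\sum_{i} x_i'(o) \le 1$, which is preserved because a transfer of one item between two agents conserves its total mass), and confirming that the perturbation alters exactly the two relevant bundle sizes by $\pm\epsilon$ — precisely the step that uses the reduction $f_{\bfv}(\bfx') = \prod_{i} |x_i'|$ together with both agents valuing $o$ at $p$. I would also note that $\log f_{\bfv}$ is concave on the feasible polytope (a sum of logs of linear functions), so these first-order conditions in fact characterize the global optimum; however, only their necessity is needed for the lemma.
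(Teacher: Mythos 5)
Your proof is correct and takes essentially the same route as the paper: both arguments reduce the objective to $\prod_{i \in N} |x_i'|$ (via the fact that $x_i'(o) > 0$ implies $v_i(o) = p$) and derive a contradiction with optimality by transferring mass of $o$ from a larger bundle to a smaller one. The only difference is cosmetic — you use infinitesimal first-order perturbations (incidentally obtaining the slightly stronger fact that all agents in $N_o$ have equal bundle size), whereas the paper moves the explicit finite amount $\min\{(|x_i'| - L)/2,\, x_i'(o)\}$ to exhibit a strict increase of $f_{\bfv}$ directly.
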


\begin{proof}
    Assume for contradiction that there exist $i, j \in N_o$ such that $|x_i'| > L$ and $|x_j'| \leq L$.
    Since $i, j \in N_o$, it follows that $v_i(o) = v_j(o) = p$.
    As a result, by moving $\min\{(|x_i'| - L) / 2, x_i'(o)\}$ fraction of $o$ from $x_i'$ to $x_j'$, we obtain from $\bfx'$ an allocation $\widehat{\bfx}$ with $f_{\bfv}(\widehat{\bfx}) > f_{\bfv}(\bfx')$, contradicting with the optimality of $\bfx'$.
    This concludes the first statement in \Cref{lem:char-mnw}.

    For the second statement in \Cref{lem:char-mnw}, assume for contradiction that there exists $i \in N_o$ and $j \in N$ such that $|x_i'| > L$, $|x_j'| \leq L$, and $v_j(o) = p$.
    Then, by moving $\min\{ (|x_i'| - L) / 2, x_i'(o)\}$ fraction of $o$ from $x_i'$ to $x_j'$, we obtain from $\bfx'$ an allocation $\widehat{\bfx}$ with $f_{\bfv}(\widehat{\bfx}) > f_{\bfv}(\bfx')$, contradicting with the optimality of $\bfx'$.
\end{proof}

We define another (possibly partial) fractional allocation $\bfx''$ as follows: for each agent $i$, if $|x_i'| \leq L$, then $x_i'' = x_i'$; otherwise, let $x_i''$ be an arbitrary subset of $x_i'$ with $|x_i''| = L$.
For each item $o$, notice that $\beta_o := 1 - \sum_{i \in N} x''_{i}(o)$ fraction of $o$ is unassigned in $\bfx''$, and our final allocation $\bfx$ satisfies
\begin{align*}
    x_{i}(o)
    = x_{i}''(o) + \beta_o \cdot \frac{L - |x_i''|}{\sum_{j \in N} (L - |x_j''|)},
\end{align*}
i.e., we allocate the unassigned part of each item to agents with the fraction received by each agent $i$ proportional to $L - |x_i''| \geq 0$.
Finally, $\calM^g$ outputs $\bfx$, i.e., $\calM^g(\bfv) = \bfx$.

\begin{lemma}[Theorem 6.1 in \cite{DBLP:journals/corr/abs-2407-13634}]\label{lem:bivalue-good-divi}
    When agents have bi-valued utility functions and items are divisible, $\calM^g$ is truthful, PO, and EF for goods.
    Moreover, every fractional allocation outputted by $\calM^g$ is an outcome of \PSg.
\end{lemma}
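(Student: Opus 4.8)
The plan is to treat the claim that $\bfx = \calM^g(\bfv)$ is a \PSg\ outcome as the structural backbone, read off envy-freeness and Pareto optimality from it, and isolate truthfulness as the genuinely delicate part. For the \PSg\ claim I would exhibit an explicit eating schedule on $[0, L]$ with $L = m/n$ (each agent consumes mass exactly $L$): in a first phase every agent $i$ consumes only goods in $P_i := \{o : v_i(o) = p\}$, prioritizing them until either time $L$ is reached or all of $P_i$ is exhausted, and in a second phase each agent that has exhausted its $p$-goods consumes the leftover mass. I would arrange the per-item amounts of the first phase to coincide with $\bfx''$ and those of the second phase with the proportional redistribution defining $\bfx$. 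Here \Cref{lem:char-mnw} is exactly what certifies that this schedule is legal: its first part guarantees that the agents sharing a good are either all ``rich'' ($|x_i'| > L$) or all ``poor'', so no agent is forced to abandon an available favorite prematurely, and its second part guarantees that whenever a good is still consumed by a rich agent in the first phase, every poor agent values it at $q$ and is therefore content to receive it only in the second phase.

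Once $\bfx$ is known to be a \PSg\ outcome, envy-freeness is immediate from the ``eat favorite first'' property. Since every agent consumes for the same total time $L$, I would write $v_i(x_i) = qL + (p - q)\mu_i$ with $\mu_i := \sum_{o \in P_i} x_i(o)$, and likewise $v_i(x_j) = qL + (p - q)\sum_{o \in P_i} x_j(o)$. Letting $\tau_i$ be the time at which $P_i$ is fully consumed, agent $i$ receives mass $\min\{\tau_i, L\}$ of $P_i$, whereas any other agent $j$ can consume $P_i$-goods only during $[0, \tau_i]$ and at unit speed, hence obtains at most $\min\{\tau_i, L\}$ mass of them; thus $\mu_i \geq \sum_{o \in P_i} x_j(o)$ and $v_i(x_i) \geq v_i(x_j)$.

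For Pareto optimality I would invoke the first welfare theorem rather than argue directly, exploiting that \Cref{lem:char-mnw} is tailored to produce a supporting price vector (the same device intended for \Cref{lem:bobw-po}). Concretely, I would price each good $o \notin H_{\bfv}$ by the common level of its consumers and give the $H_{\bfv}$-goods price $0$, then verify that under equal budgets each bundle $x_i$ lies in agent $i$'s demand set, i.e.\ $i$ spends its entire budget only on goods maximizing utility-per-price; \Cref{lem:char-mnw} is what makes this consistent across shared goods, since it forbids a poor agent from valuing a rich-consumed good at $p$ and hence prevents any misaligned demand. Pareto optimality then follows because every competitive equilibrium allocation is Pareto optimal, and this route handles uniformly the fact that a Pareto improvement could in principle trade total mass against $p$-mass.

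The main obstacle is truthfulness. Writing agent $i$'s true utility as $qL + (p - q)\mu_i$, a manipulation helps only if it strictly increases the true mass $\mu_i$ of $P_i$ that $i$ receives, so it suffices to show that reporting $v_i$ truthfully maximizes $\mu_i$ over all reports $v_i'$. The difficulty is that a misreport relabels the reported $p$-goods, which alters the maximizer $\bfx'$ and thus the entire \PSg\ run in a way entangled with the other agents' consumption, so naive monotonicity fails (indeed Nash-welfare and CEEI allocations are not truthful in general). I would reduce an arbitrary misreport to two elementary moves---relabelling a true $p$-good as $q$, and relabelling a true $q$-good as $p$---and argue each is weakly harmful: de-prioritizing a genuinely wanted good cannot raise the mass of $P_i$ that $i$ ends up with, while spending first-phase eating time on a genuinely low-value good only displaces time that would otherwise acquire $P_i$-goods. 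Making these exchange arguments rigorous, while tracking how freeing or grabbing a shared good perturbs the synchronization guaranteed by \Cref{lem:char-mnw}, is where the real work lies; this recovers Theorem~6.1 of \cite{DBLP:journals/corr/abs-2407-13634}.
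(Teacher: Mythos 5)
First, a point of comparison you could not have known: the paper does \emph{not} prove this lemma at all --- it is imported as a black box, being Theorem~6.1 of \cite{DBLP:journals/corr/abs-2407-13634}. So your proposal has to be judged as a from-scratch proof, and as such it has one fatal gap and one repairable one. The fatal gap is truthfulness. Your reduction is correct and clean: since $|x_i| = L$ under every report, agent $i$'s true utility is $qL + (p-q)\mu_i$, so truthfulness is equivalent to the truthful report maximizing the true $P_i$-mass $\mu_i$ received. But the substance --- that relabelling a true $p$-good as $q$, or a true $q$-good as $p$, is weakly harmful --- is precisely the delicate part, and you explicitly defer it (``where the real work lies''). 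Neither elementary move is local: a misreport changes the maximizer $\bfx'$ of $f_{\bfv}$ globally, can move agent $i$ across the rich/poor threshold $|x_i'| \lessgtr L$ (which changes both the truncation and $i$'s proportional share $L - |x_i''|$ of leftovers), and can change the set $H_{\bfv}$. Since this is exactly why the result is nontrivial enough to be cited rather than reproved, leaving it as a plan means the statement is not proved.

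The repairable gap is your claim that \Cref{lem:char-mnw} ``is exactly what certifies'' the legality of your two-phase eating schedule. It is not enough: the lemma separates rich from poor holders of each item, but says nothing about timing mismatches \emph{among poor agents}. Concretely, if $o \in P_i$ were held entirely by another poor agent $j$ with $x_j'(o) = 1 > |x_i'|$, then no scheduling exhausts $o$ by time $|x_i'|$, and PS-legality would force agent $i$ to keep eating $o$, contradicting the target allocation. What rescues the schedule is a strictly stronger consequence of the optimality of $\bfx'$ (a transfer/KKT argument on $\log f_{\bfv}$): every agent holding a positive fraction of $o$ has $\mu$ equal to $\min_{j : v_j(o) = p} \mu_j$, so every item of $P_i$ is exhausted by time $\mu_i$, exactly when agent $i$ switches to phase~2. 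You would need to state and prove this as a separate lemma; it is in the same spirit as, but not implied by, \Cref{lem:char-mnw}. Two smaller slips in the PO step: pricing the $H_{\bfv}$-goods at $0$ gives them infinite bang-per-buck (every agent values them at $q > 0$) and breaks the demand condition --- the working choice, mirroring \Cref{lem:bobw-po}, is $r(o) = p$ for poor-held goods and $r(o) = q$ for rich-held and $H_{\bfv}$-goods; and the supporting equilibrium does \emph{not} have equal budgets (a poor agent spends $p\mu_i + q(L - \mu_i)$, a rich agent $qL$), which is harmless since the first welfare theorem needs no budget symmetry. Your EF argument, by contrast, is the standard SD-envy-freeness of PS specialized to two-valued utilities and is correct once legality of the schedule is established.
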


Next, we describe our mechanism $\calM^c$ for chores as follows.
Given a bi-valued cost profile $\bfc$ as input, where $c_i(o) \in \{p, q\}$ for all $i \in N$ and $o \in O$ with $p > q > 0$, let $\bfv$ be the utility profile satisfying $v_i(o) = (p + q) - c_i(o)$ for all $i \in N$ and $o \in O$, and $\bfv$ is also bi-valued.
Then, $\calM^c$ will output $\calM^c(\bfc) = \calM^g(\bfv)$.

Now, we aim to show that $\calM^c$ fulfills all the requirements.
As a crucial property, by the construction of $\bfv$, for every fractional bundle $x$,
\begin{align}\label{eqn:balance-value}
    c_i(x)
    &= \sum_{o \in O} c_i(o) \cdot x(o)
    = \sum_{o \in O} (p + q - v_i(o)) \cdot x(o)\\  \nonumber
    &= (p + q) \cdot |x| - v_i(x).
\end{align}
Due to the same reason as before, assume that for each agent $i$, there exists an item $o$ such that $c_i(o) = q$.
Since $\calM^g(\bfv)$ is an outcome of \PSg, $\calM^c(\bfc)$ is an outcome of \PSc~by the construction of $\bfv$ and the fact that $\calM^c(\bfc) = \calM^g(\bfv)$.

We first show that $\calM^c$ is truthful for chores.
Suppose that agent $i$ manipulates his cost function as $c_i'$, and let $v_i'$ be the utility function satisfying $v_i'(o) = (p + q) - c_i'(o)$ for every $o \in O$.
It follows that
\begin{align*}
    c_i(\calM_i^c(c_i', c_{-i}))
    &= c_i(\calM_i^g(v_i', v_{-i}))\\
    &\overset{(a)}{=} (p + q) \cdot |\calM_i^g(v_i', v_{-i})| - v_i(\calM_i^g(v_i', v_{-i}))\\
    &\overset{(b)}{\geq} (p + q) \cdot |\calM_i^g(\bfv)| - v_i (\calM_i^g(\bfv))\\
    &\overset{(c)}{=} c_i(\calM_i^g(\bfv))
    = c_i(\calM_i^c(\bfc)),
\end{align*}
where $(a)$ and $(c)$ holds by \eqref{eqn:balance-value}, and $(b)$ holds due to the truthfulness of $\calM^g$ for goods and the fact that $|\calM_i^g(v_i', v_{-i})| = |\calM_i^g(\bfv)| = L$.
This concludes that $\calM^c$ is truthful for chores.

Moreover, to see that $\calM^c$ is EF for chores, for all $i, j \in N$,
\begin{align*}
    c_i(\calM_i^c(\bfc))
    &= c_i(\calM_i^g(\bfv))\\
    &\overset{(a)}{=} (p + q) \cdot |\calM_i^g(\bfv)| - v_i(\calM_i^g(\bfv))\\
    &\overset{(b)}{\geq} (p + q) \cdot |\calM_j^g(\bfv)| - v_i(\calM_j^g(\bfv)) \\
    &\overset{(c)}{=} c_i(\calM_j^g(\bfv))
    = c_i(\calM_j^c(\bfc)),
\end{align*}
where $(a)$ and $(c)$ holds by \eqref{eqn:balance-value}, and $(b)$ holds due to the EF of $\calM^g$ for goods and the fact that $|\calM_i^g(\bfv)| = |\calM_j^g(\bfv)| = L$.

Finally, we show that $\calM^c$ is PO for chores in the following lemma, the proof of which is largely inspired by the proof of \cite[Proposition 6.3]{DBLP:journals/corr/abs-2407-13634}.

\begin{lemma}\label{lem:bobw-po}
    $\calM^c$ is PO for chores.
\end{lemma}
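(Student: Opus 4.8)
We must show that $\calM^c$, defined by $\calM^c(\bfc) = \calM^g(\bfv)$ with $v_i(o) = (p+q) - c_i(o)$, is PO for chores. The plan is to transfer Pareto optimality from goods to chores by exploiting the \emph{balance} identity \eqref{eqn:balance-value}, namely $c_i(x) = (p+q)\cdot|x| - v_i(x)$, together with the key structural fact that every agent receives a bundle of the \emph{same size} $L = m/n$ under $\calM^g$ on bi-valued instances (this is exactly what Lemma~\ref{lem:divi-truth-po-chores} asserts via $|x_i| = m/n$, and it is built into the construction of $\bfx$). I would first record that $\calM^g(\bfv)$ is PO for goods by Lemma~\ref{lem:bivalue-good-divi}.

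**The main reduction.** The subtlety is that a naive application of the balance identity does \emph{not} immediately carry PO across, because a Pareto improvement for chores $\bfx'$ need not have $|x_i'| = L$ for every $i$ — the chore-improving reallocation might change bundle sizes, breaking the clean correspondence $c_i(x) = (p+q)L - v_i(x)$. So the argument cannot be a one-line dualization. Instead, I would argue by contradiction: suppose $\bfx' $ Pareto-dominates $\bfx := \calM^c(\bfc)$ for chores, so $c_i(x_i') \le c_i(x_i)$ for all $i$ with at least one strict. The plan is to convert $\bfx'$ into a (possibly partial) allocation that I can feed back as a good-side object, and derive a contradiction with the PO of $\bfx$ for goods. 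Since the total chore mass is conserved ($\sum_i |x_i'| = \sum_i |x_i| = m$), and since $c_i(x_i') = (p+q)|x_i'| - v_i(x_i')$, the chore-domination inequalities translate into $v_i(x_i') \ge v_i(x_i) + (p+q)(|x_i'| - |x_i|)$ for every $i$.

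**Handling the size mismatch.** The heart of the proof — and the step I expect to be the main obstacle — is controlling the size terms $(p+q)(|x_i'| - |x_i|)$. Because $\bfx$ has $|x_i| = L$ for all $i$ but $\bfx'$ may be unbalanced, I would try to show that one can reallocate the ``mass surplus'' without hurting any agent, effectively rebalancing $\bfx'$ to a size-$L$ allocation $\widehat{\bfx}'$ that still weakly dominates $\bfx$ in chore cost. Intuitively, an agent holding more than $L$ mass is holding ``extra'' chores that could be shifted to an agent holding less than $L$; because costs lie in $\{p,q\}$ with $p>q>0$, such a shift, chosen greedily to move the cheapest-to-the-recipient excess, should not increase anyone's cost and may create slack. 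Once every bundle has size exactly $L$, the size terms vanish and the balance identity gives a clean dual: $c_i(\widehat{x}_i') \le c_i(x_i)$ for all $i$ (with a strict inequality surviving) becomes $v_i(\widehat{x}_i') \ge v_i(x_i)$ for all $i$ with one strict, i.e. $\widehat{\bfx}'$ Pareto-dominates $\calM^g(\bfv)$ for goods. This contradicts Lemma~\ref{lem:bivalue-good-divi}, completing the proof.

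**What to watch for.** The delicate points are (i) verifying that the rebalancing step genuinely preserves weak chore-domination and does not destroy the strict inequality, which is where the bi-valued structure and the characterization in Lemma~\ref{lem:char-mnw} (that high-mass agents only hold items valued $q$ by low-mass agents on the good side, i.e. cost $p$) are likely essential; and (ii) ensuring the unallocated items in $H_{\bfv}$ (cost $p$ to everyone, so least desirable as chores) are handled consistently — since $\bfx$ may leave some chore mass on the ``cheapest'' items, I should confirm that any Pareto improvement cannot exploit these. I anticipate the author's proof mirrors \cite[Proposition 6.3]{DBLP:journals/corr/abs-2407-13634} by constructing explicit prices/market-equilibrium certificates rather than doing the rebalancing by hand, so an alternative and possibly cleaner route would be to exhibit a supporting price vector under which $\bfx$ is a competitive equilibrium for chores and invoke the first welfare theorem; I would pursue the equilibrium route if the direct rebalancing argument becomes combinatorially unwieldy.
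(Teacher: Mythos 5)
Your primary argument has a genuine gap at exactly the step you flagged as the heart of the proof: the rebalancing of a chore-dominating allocation $\bfx'$ to a size-$L$ allocation $\widehat{\bfx}'$ that still weakly dominates $\bfx$ in cost. Your stated intuition --- that shifting the cheapest-to-the-recipient excess ``should not increase anyone's cost'' --- is false on its face: every item has cost at least $q > 0$, so adding any mass to a recipient $j$ with $|x_j'| < L$ \emph{strictly} increases $c_j$. What you actually need is $c_j(x_j') + (\text{cost of added mass}) \leq c_j(x_j)$, and nothing in weak domination provides the required slack: $\bfx'$ may satisfy $c_j(x_j') = c_j(x_j)$ with $|x_j'| < L$ (e.g., with $p = 2q$, trade away $2\epsilon$ units of cost-$q$ mass for $\epsilon$ units of cost-$p$ mass), and then restoring the mass deficit pushes $c_j(\widehat{x}_j')$ strictly above $c_j(x_j)$, destroying the domination you need to contradict goods-side PO. Closing this hole would require detailed structural information about which items carry the shifted mass relative to the specific allocation $\bfx$ (i.e., the content of Lemma~\ref{lem:char-mnw} and the construction via $\bfx'$, $\bfx''$, and the redistribution of $H_{\bfv}$), at which point you are effectively rebuilding the paper's argument by hand; the balance identity \eqref{eqn:balance-value} alone, which works cleanly for truthfulness and EF because those comparisons are between bundles of equal size $L$, does not suffice for PO precisely because Pareto improvements need not preserve bundle sizes --- a point you correctly diagnosed but did not overcome.

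Your fallback is the right call: the paper's proof is the market-equilibrium route you anticipated. It constructs a price vector directly from the structure of the Nash-welfare-maximizing partial allocation --- price $q$ for items fully held by untruncated agents ($N \setminus Z$) in $\bfx'$, price $p$ for items held by truncated agents ($Z$) and for items in $H_{\bfv}$ --- assigns each agent the total price of their final bundle as budget, and verifies via both statements of Lemma~\ref{lem:char-mnw} that every agent spends only at their minimum bang-per-buck ratio ($q/p$ for agents in $Z$, ratio $1$ for agents in $N \setminus Z$); PO then follows from the first welfare theorem for chores. Since you only named this route without carrying out the price construction or the bang-per-buck verification (in particular, the check that items in $H_{\bfv}$ and the truncated surplus, both redistributed to agents in $N \setminus Z$, cost those agents $p$), the proposal as written does not constitute a proof.
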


\begin{proof}
    We start by introducing the notion of \emph{market equilibrium}, which is a prevalent tool for proving PO, from the perspective of chores.
    A \emph{Fisher market} comprises of $n$ agents and $m$ divisible items, where each agent $i$ has a cost function $c_i(\cdot)$ and a budget $b_i \geq 0$.
    An outcome of the market is denoted by a pair $(\bfx, \bfr)$, where $\bfx$ is a fractional allocation and $\bfr \in \R_{\geq 0}^m$ is the price vector with $r(o)$ indicating the price of item $o$.
    We say that an outcome $(\bfx, \bfr)$ forms a \emph{market equilibrium} if
    \begin{itemize}
        \item all items are completely allocated: $\sum_{i \in N} x_{i}(o) = 1$ for every $o \in O$,

        \item each agent spends all his budget: $b_i = \sum_{o \in O} r(o) \cdot x_{i}(o)$ for every $i \in N$, and

        \item each agent $i$'s bundle only contains items $o$ with the \emph{bang-per-buck} ratio $c_i(o) / r(o)$ equal to his minimum bang-per-buck ratio $\gamma_i := \min_{o' \in O} c_i(o') / r(o')$: $x_{i}(o) > 0$ implies $c_i(o) / r(o) = \gamma_i$.
    \end{itemize}
    It is known that by the \emph{first welfare theorem}, if an outcome $(\bfx, \bfr)$ is a market equilibrium, then $\bfx$ satisfies PO for chores \cite{bogomolnaia2017competitive}.

    Now, we construct a price vector $\bfr$ for the allocation $\bfx = \calM^c(\bfc)$ and assign a budget to each agent so that $(\bfx, \bfr)$ constitutes a market equilibrium.
    Recall that $\bfv$ is the bi-valued utility profile with $v_i(o) = (p + q) - c_i(o)$ for all $i \in N$ and $o \in O$, $\bfx'$ is the allocation that maximizes $f_{\bfv}(\cdot)$ defined by \eqref{eqn:nash-welfare} with items in $H_{\bfv}$ unallocated, and $\bfx''$ is the allocation obtained from $\bfx'$ by truncating the bundles $x_i'$ with $|x_i'| > L$.
    Let $Z = \{i \in N \mid |x_i'| > L \}$ denote the set of agents whose bundles are truncated.
    For each item $o \in O \setminus H_{\bfv}$, we know by the first statement in \Cref{lem:char-mnw} that in $\bfx'$, either $o$ is fully allocated to agents in $N \setminus Z$, in which case we set the price $r(o) = q$, or $o$ is fully allocated to agents in $Z$, in which case we set $r(o) = p$.
    Moreover, for each item $o \in H_{\bfv}$, we set $r(o) = p$.
    Finally, each agent $i$'s budget $b_i$ is defined as the total price of $x_i$, i.e.,
    \begin{align*}
        b_i = 
        \begin{cases}
            q \cdot |x_i'| + p \cdot (L - |x_i'|), & i \in N \setminus Z,\\
            p \cdot L, & i \in Z.
        \end{cases}
    \end{align*}

    To conclude that $(\bfx, \bfr)$ forms a market equilibrium, it suffices to show that each bundle $x_i$ only includes the items with the minimum bang-per-buck ratio $\gamma_i$ of agent $i$.
    Notice that each bundle $x_i'$ only contains the items $o$ satisfying $v_i(o) = p$, which implies that $c_i(o) = q$.
    Hence, for each agent $i \in Z$, the bang-per-buck of each item in $x_i$ is $q / p$, which is the minimum bang-per-buck ratio one can hope for.
    On the other hand, for each agent $i \in N \setminus Z$, we claim that $\gamma_i = 1$.
    To see this, by the construction of the price vector, it is sufficient to prove that each item $o$ in $x_j'$ for some $j \in Z$ satisfies $c_i(o) = p$, or equivalently, $v_i(o) = q$, which holds by the second statement in \Cref{lem:char-mnw}.
    Finally, it is easy to verify that agent $i$ only receives items with a bang-per-buck ratio of $1$ for him, concluding the proof.
\end{proof}

\subsection{Proof of Lemma~\ref{lem:decomp-ps-plus}}
\label{sec:proof-decom-ps}

We first recall the following result by \cite{DBLP:conf/wine/Aziz20} to implement fractional allocations outputted by \PSg.

\begin{lemma}[\cite{DBLP:conf/wine/Aziz20}]\label{lem:decomp-ps}
    Suppose that agents have additive utility functions.
    For every fractional allocation $\bfx$ corresponding to an outcome of \PSg, we can find a random integral allocation $\widehat{A}$ such that $\Pr[o \in \widehat{A}_i] = x_{i}(o)$ for all $i \in N$ and $o \in O$.
    Moreover, for each integral allocation $A$ in the support of $\widehat{A}$, it satisfies that $|A_i| \leq |A_j| + 1$ for all $i, j \in N$, and we can label the items in each bundle $A_i$ as $o_i^1, o_i^2, \ldots, o_i^{|A_i|}$ such that
    \begin{enumerate}
        \item $v_i(o_i^k) \geq v_i(o_i^{k + 1})$ for all $i \in N$ and $k \in [|A_i| - 1]$, and
        \item $v_i(o_i^k) \geq v_i(o_j^{k + 1})$ for all $i, j \in N$ and $k \in [\min\{|A_i|, |A_j| - 1\}]$.
    \end{enumerate}
\end{lemma}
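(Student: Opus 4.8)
The plan is to prove this statement directly from the structure of the eating process underlying \PSg, rather than invoking any external implementation result. First I would fix a PS eating schedule that produces the given outcome $\bfx$: over the time horizon $[0,T]$ with $T=m/n$, every agent consumes at unit speed, so $|x_i|=T$ for all $i$ and $x_i(o)$ equals the total time agent $i$ spends eating $o$. The single structural fact driving everything is monotonicity of the eaten value: for each agent $i$, the quantity $M_i(\tau):=\max\{v_i(o): o \text{ available at time } \tau\}$ is weakly decreasing in $\tau$, because the set of available items only shrinks, and at every instant agent $i$ eats some item attaining $M_i(\tau)$ (the tie-switching allowance does not affect this, since the item being eaten always attains the current maximum available value). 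Consequently, \emph{(a)} any item eaten later by $i$ has weakly smaller value to $i$, and \emph{(b)} if $o'$ is available at time $\tau$, then $v_i(o')\le M_i(\tau)=$ the value of whatever $i$ eats at $\tau$.

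Next I would discretize the schedule into unit-length layers $[k-1,k)$. In a full layer $k$, agent $i$ eats exactly one unit of mass, giving a fractional ``layer bundle'' $y^k_i(\cdot)$ with $\sum_{o}y^k_i(o)=1$, while $\sum_{i,k}y^k_i(o)=\sum_i x_i(o)=1$ for every item $o$. Introducing a slot $(i,k)$ for each agent–layer pair, the matrix whose rows are slots, whose columns are items, and whose entries are $y^k_i(o)$ is doubly stochastic when $T$ is an integer (there are $m$ slots and $m$ items). Birkhoff–von Neumann then writes it as a convex combination $\sum_\ell \lambda_\ell \Pi_\ell$ of permutation matrices; reading the $\lambda_\ell$ as probabilities and each $\Pi_\ell$ as a perfect matching between slots and items yields the random integral allocation $\widehat{A}$. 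Since a permutation matrix is supported on nonzero entries, slot $(i,k)$ is matched only to items that $i$ actually ate during layer $k$; distinct slots receive distinct items, so each item goes to exactly one agent, and $\Pr[o\in\widehat{A}_i]=\sum_k y^k_i(o)=x_i(o)$, giving the claimed marginals. As each agent occupies exactly $T$ slots, $|A_i|=T$ and the balance bound $|A_i|\le|A_j|+1$ holds trivially. I then define $o_i^k$ to be the item matched to slot $(i,k)$.

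The two ordering conditions now fall out of the monotonicity fact. Since $o_i^k$ is eaten during $[k-1,k)$ and $o_i^{k+1}$ during $[k,k+1)$, fact \emph{(a)} gives $v_i(o_i^k)\ge v_i(o_i^{k+1})$, which is condition (1). For condition (2), fix the instant $\tau\in[k-1,k)$ at which $i$ eats $o_i^k$; the item $o_j^{k+1}$ is eaten by $j$ at some time $\ge k>\tau$, so it is still available at $\tau$, and fact \emph{(b)} yields $v_i(o_i^k)=M_i(\tau)\ge v_i(o_j^{k+1})$. This holds for every $k\le\min\{|A_i|,|A_j|-1\}$, exactly as required.

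The main technical nuisance, and the step I expect to need the most care, is when $m/n$ is not an integer: the final layer $[\lfloor T\rfloor,T)$ is partial, its slots have row sum $r=T-\lfloor T\rfloor<1$, and the slot–item matrix is neither square nor doubly stochastic, so plain Birkhoff–von Neumann does not apply. I would resolve this by padding: introduce $n-(m\bmod n)$ null items of value $0$ to every agent and distribute their mass uniformly ($1/n$ per final-layer slot) so that each final-layer slot's row sum rises to $1$; a direct count shows the augmented matrix is square with all row and column sums equal to $1$, so the decomposition proceeds as before. The null items are discarded from the output, leaving each agent who received a null final-layer item with $\lfloor T\rfloor$ real items and the rest with $\lceil T\rceil$, which preserves balance and the real-item marginals. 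Because null items carry value $0$, they could only ever occupy the last position in any agent's value ordering, so deleting them and relabeling leaves conditions (1) and (2)—which compare only real items—intact.
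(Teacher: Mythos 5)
Your proof is correct, but there is an important framing point: the paper does not prove this statement at all --- it is imported verbatim as a black box from \cite{DBLP:conf/wine/Aziz20}, and the paper's own technical work starts only in \Cref{lem:decomp-ps-plus}, where the cited lemma is applied. What you have written is a self-contained reconstruction of the PS-lottery implementation underlying the cited result, and it checks out: the two monotonicity facts are sound (availability only shrinks, so an item some agent eats at a time $\geq k$ is still available at any $\tau < k$, which gives both condition (1) via fact (a) and condition (2) via fact (b)); the marginal computation $\Pr[o \in \widehat{A}_i] = \sum_k y_i^k(o) = x_i(o)$ is right because each permutation matrix assigns $o$ to exactly one slot, making the events disjoint across $k$; and the equal slot counts give the balance bound. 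Your treatment of non-integral $m/n$ is also valid, and it is amusingly dual to the padding the paper itself performs in the proof of \Cref{lem:decomp-ps-plus}: there the authors add $n - r$ dummy items of value $2c_{\max}$, which \PSg{} eats \emph{first}, so the dummies occupy position $1$ and force a relabeling of the affected bundles $A_j = \{o_j^2, \ldots, o_j^{c+1}\}$; your zero-value dummies are eaten \emph{last}, occupy the final slot of the agents who receive them, and can be deleted with no relabeling. The one step worth making explicit (you do so only implicitly) is that the extended schedule in which agents eat the null items on $[T, \lceil T \rceil)$ remains a legitimate \PSg{} run and preserves facts (a) and (b): null items have value $0 \leq v_i(o)$ for every real item $o$, so they never displace a real item from the instantaneous maximum $M_i(\tau)$, and the tie-switching allowance permits the uniform $1/n$ split; consequently a null item can only be matched to a final-layer slot, which is exactly why it can only occupy the last position. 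In short: the proposal is correct; relative to this paper it supplies a proof where the paper supplies a citation, and relative to the cited source it follows the standard layered Birkhoff--von Neumann route.
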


Let $\bfc$ be an additive cost profile and $\bfx$ be a fractional allocation corresponding to an outcome of \PSc~with respect to $\bfc$.
Let $c_{\max} = \max_{i \in N, o \in O} c_i(o) > 0$, and we define an additive utility profile $\bfv$ with $v_i(o) = c_{\max} - c_i(o)$ for all $i \in N$ and $o \in O$.
Since $v_i(o) \geq v_i(o')$ iff $c_i(o) \leq c_i(o')$ for all $i \in N$ and $o, o' \in O$, it suffices to find a random integral allocation $\widehat{A}$ such that $\Pr[o \in \widehat{A}_i] = x_{i}(o)$ for all $i \in N$ and $o \in O$.
Moreover, we will show that the given $\widehat{A}$ also satisfies that for each integral allocation $A$ in the support of $\widehat{A}$, $|A_i| \leq |A_j| + 1$ for all $i, j \in N$, and we can label the items in each bundle $A_i$ as $o_i^1, o_i^2, \ldots, o_i^{|A_i|}$ such that
\begin{enumerate}
    \item \label{item:con1-good} $v_i(o_i^k) \geq v_i(o_i^{k + 1})$ for all $i \in N$ and $k \in [|A_i| - 1]$,
    \item \label{item:con2-good} for all $i, j \in N$ with $|A_i| \leq |A_j|$, for every $k \in [|A_j| - 1]$, $v_i(o_i^k) \geq v_i(o_j^{k + 1})$, and
    \item \label{item:con3-good} for all $i, j \in N$ with $|A_i| > |A_j|$, for every $k \in [|A_j|]$, $v_i(o_i^k) \geq v_i(o_j^k)$.
\end{enumerate}

    Notice that $\bfx$ is an outcome of \PSg~with respect to $\bfv$ due to the construction of $\bfv$ and the fact that $\bfx$ is an outcome of \PSc~with respect to $\bfc$.
    If $m$ is a multiple of $n$, then the random allocation given by \Cref{lem:decomp-ps} satisfies all desired conditions since all bundles in the resulting integral allocation are of the same size.
    Now, we focus on the case where $m = kn + r$ for some $k \in \mathbb{Z}_{\geq 0}$ and $r \in \{1, \ldots, n - 1\}$.
    We create a set of $n - r$ new items, denoted as $O' = \{o_1', \ldots, o_{n - r}'\}$, and set $v_i(o) = 2c_{\max} > c_{\max}$ for all $i \in N$ and $o \in O'$.
    Moreover, let $\bfx'$ be a fractional allocation over $O \cup O'$ defined as $x_i'(o) = x_i(o)$ for all $i \in N$ and $o \in O$, and $x_i'(o) = (n - r) / n$ for all $i \in N$ and $o \in O'$.
    Since $\bfx$ is an outcome of \PSg~and $v_i(o) \leq c_{\max} < v_i(o')$ for all $i \in N$, $o \in O$, and $o' \in O'$, it follows that $\bfx'$ is also an outcome of \PSg.
    Hence, we can apply \Cref{lem:decomp-ps} on $\bfx'$ to obtain a random integral allocation $\widehat{A}'$.
    Let $\widehat{A}$ be the random integral allocation obtained by restricting $\widehat{A}'$ on $O$, i.e., $\widehat{A}_i = \widehat{A}_i' \cap O$ for every $i \in N$, and we show that $\widehat{A}$ meets our requirements.

    Let $A'$ be an arbitrary allocation in the support of $\widehat{A}'$, and we start by establishing some properties of $A'$.
    We label the items in each bundle $A_i'$ by $o_i^1, o_i^2, \ldots, o_i^{|A_i'|}$ as specified by \Cref{lem:decomp-ps}.
    Since $|O \cup O'| = (k + 1)n$, it follows that $|A_i'| = k + 1$ for every $i \in N$.
    Furthermore, we claim that $|A_i' \cap O'| \leq 1$ for every $i \in N$.
    To see this, assume for contradiction that $|A_i' \cap O'| \geq 2$ for some $i \in N$, which implies that $o_i^1, o_i^2 \in O'$.
    Let $j \in N$ satisfy $A_j' \cap O' = \emptyset$, which must exist since $|O'| < n$, and it holds that $o_j^1 \in O$.
    However, by the guarantee of \Cref{lem:decomp-ps}, $c_{\max} \geq v_j(o_j^1) \geq v_j(o_i^2) = 2c_{\max}$, leading to a contradiction.

    Now, we show that $\widehat{A}$ is a desired random allocation.
    Firstly, for all $i \in N$ and $o \in O$, $\Pr[o \in \widehat{A}_i] = \Pr[o \in \widehat{A}_i'] = x_{i}'(o) = x_{i}(o)$.
    Next, let $A'$ be an arbitrary allocation in the support of $\widehat{A}'$, and let $A$ be the allocation obtained by restricting $A'$ on $O$, i.e., $A_i = A_i' \cap O$ for every $i \in N$.
    For all $i, j \in N$, since $|A_j' \cap O'| \leq 1$, it holds that $|A_j| \geq |A_j'| - 1 = |A_i'| - 1 \geq |A_i| - 1$.
    Notice that $A$ satisfies Properties~\ref{item:con1-good} and~\ref{item:con2-good} due to the guarantee of $A'$ promised by \Cref{lem:decomp-ps}, and it remains to show that $A$ satisfies Property~\ref{item:con3}.
    Fix $i, j \in N$ with $|A_i| > |A_j|$, and it must hold that $A_i = \{o_i^1, o_i^2, \ldots, o_i^{c + 1}\}$ and $A_j = \{o_j^2, o_j^3, \ldots, o_j^{c + 1}\}$ with $o_j^1 \in O'$.
    Then, Property~\ref{item:con3} follows by the guarantee of $A'$ promised by \Cref{lem:decomp-ps} and the relabeling of items in $A_j$.

\section*{Acknowledgments}
We would like to thank the anonymous reviewers for their detailed feedback and comments that helped significantly improve the exposition of this paper.
Bo Li is partly funded by the Hong Kong SAR Research Grants Council (No. PolyU 15224823) and the Department of Science and Technology of Guangdong Province (No. 2023A1515010592).
The research of Biaoshuai Tao is supported by the National Natural Science Foundation of China (No. 62472271) and the Key Laboratory of Interdisciplinary Research of Computation and Economics (Shanghai University of Finance and Economics), Ministry of Education.
Xiaowei Wu is funded by the Science and Technology Development Fund (FDCT), Macau SAR (file no. 0147/2024/RIA2, 0014/2022/AFJ, 0085/2022/A, and 001/2024/SKL).

\bibliographystyle{alpha}
\bibliography{references}

\clearpage
\appendix
\section{Characterizing Truthful Mechanisms for Indivisible Chores with Two Agents}
\label{sec:indiv-char}

In this section, we assume that there are $n = 2$ agents, and items are indivisible.
When agents' strategic behaviors are taken into consideration, \cite{DBLP:conf/sigecom/AmanatidisBCM17} characterize all truthful mechanisms for goods with two agents.
By leveraging \Cref{thm:trans-indiv-two-agents}, we obtain the counterpart of the characterization for chores.
In particular, we show that for two agents, the family of truthful mechanisms for chores admits a form symmetric to that for goods.

\subsection{Picking-Exchange Mechanisms}

We first define the related types of mechanisms for goods, which are proposed by \cite{DBLP:conf/sigecom/AmanatidisBCM17}, and their adaptations for chores.

\paragraph{Picking mechanisms.}
We define the family of \emph{picking mechanisms} where each agent makes a selection from the offers that the mechanism proposes to him.
Given a set of items $S$, \emph{a set of offers $\calP$ on $S$} is defined as a nonempty collection of subsets of $S$ that exactly covers $S$ (i.e., $\bigcup_{T \in \calP} T = S$) such that no element in $S$ appears in all subsets (i.e., $\bigcap_{T \in \calP} T = \emptyset$).

\begin{definition}
    A mechanism $\calM$ is a \emph{picking mechanism for goods (resp. chores)} if there exists a partition $(X_1, X_2)$ of $O$ and sets of offers $\calP_1$ and $\calP_2$ respectively on $X_1$ and $X_2$ such that for every utility profile $\bfv$ (resp. cost profile $\bfc$),
    \begin{align*}
        \calM_i(\bfv) \cap X_i \in \argmax_{S \in \calP_i} v_i(S) \quad \left( \text{resp. } \calM_i(\bfc) \cap X_i \in \argmin_{S \in \calP_i} c_i(S) \right)
    \end{align*}
    for every $i \in N$.
\end{definition}

\paragraph{Exchange mechanisms.}
We now define another family of mechanisms called \emph{exchange mechanisms}.
For two disjoint subsets $S, T$ of $O$, we refer to the ordered pair $(S, T)$ as an \emph{exchange deal}.
Given a utility profile $\bfv$ (resp. cost profile $\bfc$), we say that an exchange deal $(S, T)$ is \emph{favorable with respect to $\bfv$ for goods (resp. $\bfc$ for chores)} if $v_1(T) > v_1(S)$ and $v_2(T) < v_2(S)$ (resp. $c_1(T) < c_1(S)$ and $c_2(T) > c_2(S)$), and is \emph{unfavorable with respect to $\bfv$ for goods (resp. $\bfc$ for chores)} if $v_1(T) < v_1(S)$ or $v_2(T) > v_2(S)$ (resp. $c_1(T) > c_1(S)$ or $c_2(T) < c_2(S)$).
Let $S$ and $T$ be two disjoint subsets of $O$, and let $S_1, \ldots, S_k$ and $T_1, \ldots, T_k$ be two collections of nonempty and pairwise disjoint subsets of $S$ and $T$, respectively.
Then, we say that the set of exchange deals $D = \{(S_1, T_1), (S_2, T_2), \ldots, (S_k, T_k)\}$ on $(S, T)$ is \emph{valid}.

\begin{definition}
    A mechanism $\calM$ is an \emph{exchange mechanism for goods (resp. chores)} if there exists a partition $(Y_1, Y_2)$ of $O$ and a valid set of exchange deals $D = \{(S_1, T_1), \ldots, (S_k, T_k)\}$ on $(Y_1, Y_2)$ such that for every utility profile $\bfv$ (resp. cost profile $\bfc$), there exists a set of indices $I = I(\bfv) \subseteq [k]$ (resp. $I \subseteq I(\bfc) \subseteq [k]$) satisfying
    \begin{align*}
        \calM_1(\bfv) = \left( Y_1 \setminus \bigcup_{i \in I} S_i \right) \cup \bigcup_{i \in I} T_i
        \quad \left( \text{resp. } \calM_1(\bfc) = \left( Y_1 \setminus \bigcup_{i \in I} S_i \right) \cup \bigcup_{i \in I} T_i  \right).
    \end{align*}
    Moreover, $I$ contains the indices of every favorable exchange deal with respect to $\bfv$ for goods (resp. $\bfc$ for chores), but no indices of unfavorable exchange deals with respect to $\bfv$ for goods (resp. $\bfc$ for chores).
\end{definition}

\paragraph{Picking-exchange mechanisms.}
Now, we move to the family of picking-exchange mechanisms, which generalizes both picking and exchange mechanisms.

\begin{definition}
    A mechanism $\calM$ is a \emph{picking-exchange mechanism for goods (resp. chores)} if there exists a partition $(X_1, X_2, Y_1, Y_2)$ of $O$, sets of offers $\calP_1$ and $\calP_2$ respectively on $X_1$ and $X_2$, and a valid set of exchange deals $D = \{(S_1, T_1), \ldots, (S_k, T_k)\}$ on $(Y_1, Y_2)$, such that for all utility profile $\bfv$ (resp. cost profile $\bfc$) and $i \in N$,
    \begin{align*}
        \calM_i(\bfv) \cap X_i \in \argmax_{S \in \calP_i} v_i(S) \quad \left( \text{resp. } \calM_i(\bfc) \cap X_i \in \argmin_{S \in \calP_i} c_i(S) \right),
    \end{align*}
    and
    \begin{align*}
        \calM_1(\bfv) \cap (Y_1 \cup Y_2) = \left( Y_1 \setminus \bigcup_{i \in I} S_i \right) \cup \bigcup_{i \in I} T_i
        \quad \left( \text{resp. } \calM_1(\bfc) \cap (Y_1 \cup Y_2) = \left( Y_1 \setminus \bigcup_{i \in I} S_i \right) \cup \bigcup_{i \in I} T_i \right),
    \end{align*}
    where $I = I(\bfv) \subseteq [k]$ (resp. $I = I(\bfc) \subseteq [k]$) contains the indices of all favorable exchange deals with respect to $\bfv$ for goods (resp. $\bfc$ for chores), but no indices of unfavorable exchange deals with respect to $\bfv$ for goods (resp. $\bfc$ for chores).
\end{definition}

It is easy to verify that every picking-exchange mechanism for goods (resp. chores) is truthful for goods (resp. chores).
Moreover, as discussed by \cite{DBLP:conf/sigecom/AmanatidisBCM17}, it is instructive to treat a picking-exchange mechanism as independently running a picking mechanism on $(X_1, X_2)$ and an exchange mechanism on $(Y_1, Y_2)$.
However, this is not necessarily true in general, as the tie-breaking of choosing offers in the picking mechanism might be correlated with the decision of whether to perform an exchange deal that is neither favorable nor unfavorable.
We refer to \cite{DBLP:conf/sigecom/AmanatidisBCM17} for more comprehensive discussions and illustrations (see, e.g., \cite[Example 3]{DBLP:conf/sigecom/AmanatidisBCM17}) on the families of mechanisms defined above.

\subsection{Characterization}

Recall that we have the following characterization of truthful mechanisms for goods.

\begin{theorem}[\cite{DBLP:conf/sigecom/AmanatidisBCM17}]\label{thm:truth-mecha-goods}
    For indivisible items with two agents, a mechanism is truthful for goods iff it is a picking-exchange mechanism for goods.
\end{theorem}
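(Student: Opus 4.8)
The plan is to prove the two directions separately, using the taxation (menu) principle as the main engine and then performing a structural analysis of how each agent's attainable bundles vary with the other agent's report. The easy direction supplies the offer/deal vocabulary that the hard direction must then recover from an arbitrary truthful mechanism.

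For the direction that every picking-exchange mechanism is truthful, I would argue componentwise, exploiting that the picking region $X_1 \cup X_2$ and the exchange region $Y_1 \cup Y_2$ are disjoint and handled independently. On the picking part, agent $i$ receives $\argmax_{S \in \calP_i} v_i(S)$, which depends only on the report $v_i$, so any misreport yields a weakly worse offer. On the exchange part, I would observe that a deal $(S_\ell, T_\ell)$ enters $I$ only when it is favorable (hence beneficial to both agents) and is excluded whenever it is unfavorable; since agent $i$'s report governs only its own side of these inequalities and cannot make the other agent strictly prefer a swap it dislikes, agent $i$ can neither force through a deal nor block one beyond what truthful reporting already secures. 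Summing the additive gains over the two disjoint regions gives truthfulness.

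For the converse, let $\calM$ be truthful. The first step is the taxation principle: fixing $v_2$, truthfulness forces $\calM_1(v_1, v_2) \in \argmax_{S \in M_1(v_2)} v_1(S)$, where $M_1(v_2) := \{\calM_1(v_1, v_2) : v_1\}$ is the range of bundles agent $1$ can attain; symmetrically agent $2$ maximizes over $M_2(v_1)$, and throughout $\calM_2 = O \setminus \calM_1$. The core task is to show that these two coupled menus must take the picking-plus-exchange form. I would classify each item by its allocation behaviour: items assigned to the same agent across all profiles form the default skeleton of $(Y_1, Y_2)$, while the remaining items split according to whether their fate can be altered by one agent acting unilaterally (holding the other's report fixed) or only through a transfer that both menus must jointly sanction. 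The unilateral items yield the picking regions $X_1, X_2$: here I would prove that the set of bundles agent $i$ can realise on these items is independent of $v_{-i}$, which is exactly a fixed set of offers $\calP_i$. The bilateral items yield $Y_1, Y_2$ together with the deals $D$: a block $S_\ell \subseteq Y_1$ moves to agent $2$ in exchange for $T_\ell \subseteq Y_2$ precisely when the swap is jointly beneficial, matching the favorable/unfavorable conditions. Finally I would reconstruct $\calP_1, \calP_2, D$ from this classification and verify the resulting picking-exchange mechanism agrees with $\calM$ on every profile, with the boundary profiles---an agent indifferent among offers, or a deal that is neither favorable nor unfavorable---being exactly the cases where $\calM$'s tie-breaking is unconstrained, so agreement can be arranged.

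The hard part will be the classification step, and specifically ruling out residual ``contingent coupling'': configurations in which agent $1$'s selection on one block of items depends on $v_2$ in a manner not expressible as independent offers plus bilateral deals. Establishing that the changeable items decompose cleanly into a unilateral-picking region and a bilateral-exchange region is where additivity must be exploited heavily---by comparing $\calM$ across profiles that differ only by shifting value onto a single item or block, and by tracking the indifference thresholds of $v_2$ at which $M_1(v_2)$ is allowed to change. This is the combinatorial heart of the characterization, and the remaining verification is comparatively routine bookkeeping.
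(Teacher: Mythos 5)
First, a framing point: the paper does not prove this statement at all---it is imported verbatim from \cite{DBLP:conf/sigecom/AmanatidisBCM17} and used as a black box in \Cref{sec:indiv-char}, where it is combined with \Cref{thm:trans-indiv-two-agents} to obtain the chores characterization (\Cref{thm:truth-mecha-chores}). So your attempt has to be measured against the original proof of Amanatidis et al., and against that benchmark the hard direction of your proposal has a genuine gap. The taxation/menu principle is indeed the correct opening move: truthfulness is equivalent to $\calM_1(v_1, v_2) \in \argmax_{S \in M_1(v_2)} v_1(S)$ with $M_1(v_2) = \{\calM_1(v_1', v_2) \mid v_1' \text{ additive}\}$, and symmetrically for agent $2$. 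But everything after that in your sketch is a restatement of the conclusion rather than an argument: the claim that the items decompose into a region where agent $i$'s attainable sub-bundles form a fixed, $v_{-i}$-independent set of offers, plus a region where all cross-dependence factors through finitely many pairwise-disjoint exchange deals governed exactly by the favorable/unfavorable conditions, \emph{is} the theorem. You correctly identify ruling out ``contingent coupling'' as the crux and then defer it. In \cite{DBLP:conf/sigecom/AmanatidisBCM17} this step is the bulk of the work: it runs through a long sequence of structural lemmas (a notion of a player \emph{controlling} items and sets of items, analysis of the mechanism on restricted subclasses of profiles, and a careful case analysis extracting the partition $(X_1, X_2, Y_1, Y_2)$ and the deals). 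Nothing in your plan indicates how the additivity comparisons you invoke would actually exclude, say, a menu whose offers on one block of items shift with the relative values $v_2$ assigns across several other items. As written, the converse direction is a plan with its load-bearing lemma missing.

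The easy direction also has a concrete soft spot. Your premise that the picking and exchange regions are ``handled independently'' is exactly what the paper cautions is not true in general: the tie-breaking among offers may be correlated with the treatment of deals that are neither favorable nor unfavorable. For the picking part no independence is needed and the argument is airtight---under any misreport the selected offer still lies in $\calP_i$, so its true value is at most $\max_{S \in \calP_i} v_i(S)$, which truth-telling attains. For the exchange part, however, neutral deals require genuine care: if agent $2$ is exactly indifferent on a deal that agent $1$ strictly likes, the deal is neutral, its inclusion in $I$ is unconstrained by the definition, and $I(\bfc)$ may depend on agent $1$'s own report; a selection rule that includes such a deal only when agent $1$ reports indifference on it would reward misreporting. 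So ``agent $i$ can neither force through a deal nor block one beyond what truthful reporting already secures'' is not yet a proof---you must pin down precisely which profile-dependence of $I$ is admissible, which is the tie-breaking subtlety that both \cite{DBLP:conf/sigecom/AmanatidisBCM17} and the discussion in \Cref{sec:indiv-char} explicitly flag.
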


We provide an analog of \Cref{thm:truth-mecha-goods} for chores by combining it with \Cref{thm:trans-indiv-two-agents}.

\begin{theorem}[Restatement of \Cref{thm:char-truth-chores-informal}]\label{thm:truth-mecha-chores}
    For indivisible items with two agents, a mechanism is truthful for chores iff it is a picking-exchange mechanism for chores.
\end{theorem}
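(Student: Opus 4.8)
The plan is to exploit the same bundle-swapping reduction used for \Cref{thm:trans-indiv-two-agents}, now read as an involution on mechanisms. Given an arbitrary chore mechanism $\calM^c$, define the goods mechanism $\calM^g$ by $\calM^g(\bfv) = (\calM^c_2(\bfv), \calM^c_1(\bfv))$ on every input profile (which we view interchangeably as a utility or cost profile, since additive functions are encoded by the same vectors); swapping the two bundles again recovers $\calM^c$, so every chore mechanism arises this way and $\calM^c_i(\bfc) = O \setminus \calM^g_i(\bfc)$ for $i \in N$. By \Cref{thm:trans-indiv-two-agents}, $\calM^c$ is truthful for chores iff $\calM^g$ is truthful for goods, and by \Cref{thm:truth-mecha-goods} the latter holds iff $\calM^g$ is a picking-exchange mechanism for goods. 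Thus the whole statement reduces to a single structural lemma: the bundle swap carries picking-exchange mechanisms for goods to picking-exchange mechanisms for chores, and vice versa. Chaining these three equivalences then yields the theorem.

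To prove the structural lemma, I would fix a picking-exchange mechanism $\calM^g$ for goods with partition $(X_1, X_2, Y_1, Y_2)$, offers $\calP_1, \calP_2$ on $X_1, X_2$, and a valid set of deals $D = \{(S_1, T_1), \ldots, (S_k, T_k)\}$ on $(Y_1, Y_2)$, and exhibit the corresponding chore data. For the picking part I would keep the regions $X_1, X_2$ but replace each offer set by its region-wise complement, $\calP_i' = \{X_i \setminus S : S \in \calP_i\}$. Restricting $\calM^c_i(\bfc) = O \setminus \calM^g_i(\bfc)$ to $X_i$ gives $\calM^c_i(\bfc) \cap X_i = X_i \setminus (\calM^g_i(\bfc) \cap X_i)$; since $c_i(X_i \setminus S) = c_i(X_i) - c_i(S)$, maximizing $c_i$ over $\calP_i$ becomes minimizing $c_i$ over $\calP_i'$, which is exactly the chore picking rule, and the coverage and empty-intersection conditions for $\calP_i'$ follow from De Morgan's laws applied to those of $\calP_i$. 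For the exchange part I would swap the two base bundles, taking $Y_1' = Y_2$ and $Y_2' = Y_1$, and reverse every deal to $(S_i', T_i') = (T_i, S_i)$; a direct complementation computation shows $\calM^c_1(\bfc) \cap (Y_1 \cup Y_2) = \bigl(Y_2 \setminus \bigcup_{i \in I} T_i\bigr) \cup \bigcup_{i \in I} S_i$, which is precisely the chore-exchange output for the reversed deals under the same index set $I$, and the validity of $D'$ (pairwise-disjoint $S_i' = T_i \subseteq Y_2$ and $T_i' = S_i \subseteq Y_1$) is immediate from that of $D$.

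The step I expect to be the crux is verifying that the index set is genuinely preserved, i.e.\ that the favorable/unfavorable bookkeeping lines up. Here I would check that, with $\bfv := \bfc$, a goods-deal $(S_i, T_i)$ is favorable (resp.\ unfavorable) with respect to $\bfv$ exactly when the reversed chore-deal $(T_i, S_i)$ is favorable (resp.\ unfavorable) with respect to $\bfc$: both reduce to $c_1(T_i) > c_1(S_i) \wedge c_2(T_i) < c_2(S_i)$ (resp.\ $c_1(T_i) < c_1(S_i) \vee c_2(T_i) > c_2(S_i)$). Hence the set $I$ forced to contain the favorable goods-deals and exclude the unfavorable ones coincides with the set forced to contain the favorable chore-deals and exclude the unfavorable ones, so a single $I$ serves both formulations. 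The remaining subtlety, flagged in \Cref{sec:indiv-char}, is that the definition permits the tie-breaking among offers to be correlated with the inclusion of deals that are neither favorable nor unfavorable; but since $\calM^c$ is defined profile-by-profile as the exact swap of $\calM^g$, any such correlated choice made by $\calM^g$ is inherited verbatim by $\calM^c$, so no independence needs to be argued. The converse direction follows by applying the identical argument to the involution run the other way, completing the lemma and thus the theorem.
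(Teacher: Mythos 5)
Your proposal is correct and follows essentially the same route as the paper's proof: swap bundles to obtain a truthful goods mechanism, invoke the characterization of \cite{DBLP:conf/sigecom/AmanatidisBCM17} (\Cref{thm:truth-mecha-goods}), and translate the picking-exchange structure back by complementing each offer set within its region ($\calP_i' = \{X_i \setminus S \mid S \in \calP_i\}$) and reversing each deal to $(T_i, S_i)$ on $(Y_2, Y_1)$, with the favorable/unfavorable correspondence ensuring the same index set $I$ serves both. Your explicit checks of the De Morgan conditions on $\calP_i'$ and of the tie-breaking correlation being inherited verbatim are details the paper leaves implicit, but they do not change the argument.
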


\begin{proof}
    It is easy to verify that every picking-exchange mechanism for chores is truthful for chores.
    Hence, it remains to show that every truthful mechanism for chores can be implemented as a picking-exchange mechanism for chores.
    Fix a truthful mechanism $\calM^c$ for chores.
    Let $\calM^g$ be the mechanism satisfying $\calM^g(\bfc) = (\calM^c_2(\bfc), \calM^c_1(\bfc))$ for every cost profile $\bfc$, and $\calM^g$ is truthful for goods by \Cref{thm:trans-indiv-two-agents} and the truthfulness of $\calM^c$ for chores.
    Furthermore, by \Cref{thm:truth-mecha-goods}, $\calM^g$ can be implemented as a picking-exchange mechanism for goods.
    As a result, there exists a partition $(X_1, X_2, Y_1, Y_2)$ of $O$, sets of offers $\calP_1$ and $\calP_2$ respectively on $X_1$ and $X_2$, and a valid set of exchange deals $D = \{(S_1, T_1), \ldots, (S_k, T_k)\}$ on $(Y_1, Y_2)$, such that for all cost profile $\bfc$ and $i \in N$,
    \begin{align*}
        \calM^g_i(\bfc) \cap X_i \in \argmax_{S \in \calP_i} c_i(S) \quad \text{and} \quad
        \calM^g_1(\bfc) \cap (Y_1 \cup Y_2) = \left( Y_1 \setminus \bigcup_{i \in I} S_i \right) \cup \bigcup_{i \in I} T_i,
    \end{align*}
    where $I = I(\bfc) \subseteq [k]$ contains the indices of all favorable exchange deals with respect to $\bfc$ for goods, but no indices of unfavorable exchange deals with respect to $\bfc$ for goods.

    Now, we are ready to show that $\calM^c$ can be implemented as a picking-exchange mechanism for chores.
    Define new sets of offers by $\calP_1' = \{X_1 \setminus S \mid S \in \calP_1\}$ and $\calP_2' = \{X_2 \setminus S \mid S \in \calP_2\}$ respectively on $X_1$ and $X_2$, and define a valid set of exchange deals $D' = \{(T_1, S_1), \ldots, (T_k, S_k)\}$ on $(Y_2, Y_1)$.
    For all cost profile $\bfc$ and $i \in N$, it holds that
    \begin{align*}
        \calM^c_i(\bfc) \cap X_i
        &= (O \setminus \calM_i^g(\bfc)) \cap X_i
        = X_i \setminus (\calM_i^g(\bfc) \cap X_i)\\
        &\in \argmin_{S \in \calP_i} c_i(X_i \setminus S)
        = \argmin_{S' \in \calP_i'} c_i(S'),
    \end{align*}
    and
    \begin{align*}
        \calM_1^c(\bfc) \cap (Y_1 \cup Y_2)
        &= (O \setminus \calM_1^g(\bfc)) \cap (Y_1 \cup Y_2)
        = (Y_1 \cup Y_2) \setminus (\calM_1^g(\bfc) \cap (Y_1 \cup Y_2))\\
        &= (Y_1 \cup Y_2) \setminus \left( \left( Y_1 \setminus \bigcup_{i \in I} S_i \right) \cup \bigcup_{i \in I} T_i \right)
        = \left( Y_2 \setminus \bigcup_{i \in I} T_i \right) \cup \bigcup_{i \in I} S_i,
    \end{align*}
    where $I = I(\bfc)$.
    Since an exchange deal $(S, T)$ is favorable (resp. unfavorable) for goods iff the exchange deal $(T, S)$ is favorable (resp. unfavorable) for chores, it follows that $I$ contains the indices of all favorable exchange deals in $D'$ with respect to $\bfc$ for chores, but no indices of unfavorable exchange deals in $D'$ with respect to $\bfc$ for chores.
    Combining all the above analysis concludes that $\calM^c$ can be implemented as a picking-exchange mechanism for chores.
\end{proof}
\section{Proof of \Cref{lem:anony-item-symme}}
\label{sec:proof-div}

    Let $\calM$ be a truthful mechanism with an efficiency ratio of $\delta$ for chores, and let $\calM'$ be the mechanism satisfying
    \begin{align*}
        \calM_{1, o}'(c_1, c_2) = \frac{1}{2m!} \sum_{\sigma \in \Sym(O)} \left( \calM_{1, \sigma(o)} (c_1^{\sigma}, c_2^{\sigma}) + \calM_{2, \sigma(o)}(c_2^{\sigma}, c_1^{\sigma}) \right)
    \end{align*}
    and
    \begin{align*}
        \calM_{2, o}'(c_1, c_2) = \frac{1}{2m!} \sum_{\sigma \in \Sym(O)} \left( \calM_{2, \sigma(o)} (c_1^{\sigma}, c_2^{\sigma}) + \calM_{1, \sigma(o)}(c_2^{\sigma}, c_1^{\sigma}) \right)
    \end{align*}
    for all cost profile $\bfc = (c_1, c_2)$ and $o \in O$.
    It is easy to verify by definition that $\calM'$ is anonymous and item-symmetric.
    To see that $\calM'$ is truthful for chores, for all cost profile $\bfc = (c_1, c_2)$ and cost function $\hat{c}_1$,
    \begin{align*}
        c_1(\calM_1'(\hat{c}_1, c_2))
        &= \frac{1}{2m!} \sum_{\sigma \in \Sym(O)} \left( c_1^{\sigma}(\calM_1(\hat{c}_1^{\sigma}, c_2^{\sigma})) + c_1^{\sigma}(\calM_2(c_2^{\sigma}, \hat{c}_1^{\sigma})) \right)\\
        &\geq \frac{1}{2m!} \sum_{\sigma \in \Sym(O)} \left( c_1^{\sigma}(\calM_1(c_1^{\sigma}, c_2^{\sigma})) + c_1^{\sigma}(\calM_2(c_2^{\sigma}, c_1^{\sigma})) \right) \\
        &= c_1(\calM_1'(c_1, c_2)),
    \end{align*}
    where the inequality holds by the truthfulness of $\calM$ for chores.
    We can analogously show that agent $2$ cannot decrease his cost via misreporting under $\calM'$, concluding that $\calM'$ is truthful for chores.
    Finally, the efficiency ratio of $\calM'$ is given by
    \begin{align*}
        \inf_{\bfc} \frac{\SW^c(\bfc)}{\SW(\calM'(\bfc))}  
        &= \inf_{\bfc} \frac{\SW^c(\bfc)}{\frac{1}{2m!}\sum_{\sigma \in \Sym(O)} \left( \SW(\calM(c_1^{\sigma}, c_2^{\sigma})) + \SW(\calM(c_2^{\sigma}, c_1^{\sigma})) \right)}\\
        &\geq \inf_{\bfc} \frac{\SW^c(\bfc)}{\max \left \{ \max_{\sigma \in \Sym(O)} \SW(\calM(c_1^{\sigma}, c_2^{\sigma})), \max_{\sigma \in \Sym(O)} \SW(\calM(c_2^{\sigma}, c_1^{\sigma}))\right \}}\\
        &= \inf_{\bfc} \frac{\SW^c(\bfc)}{\SW(\calM(\bfc))}\\
        &= \delta,
    \end{align*}
    which concludes the proof.

\end{document}